\newcommand{\R}{\mathbb{R}}
\newcommand{\C}{\mathbb{C}}
\newcommand{\iu}{\mrm{i}\mkern1mu}
\newcommand{\tran}{\mathsf{T}}
\newcommand{\herm}{\mathsf{H}}
\newcommand{\frob}{\mathsf{F}}
\DeclareMathOperator*{\argmax}{argmax}
\DeclareMathOperator*{\argmin}{argmin}
\DeclarePairedDelimiter{\abs}{\lvert}{\rvert}
\DeclarePairedDelimiter{\norm}{\lVert}{\rVert}
\DeclarePairedDelimiterX{\inner}[2]{\langle}{\rangle}{{#1},{#2}}
\DeclarePairedDelimiter{\paren}{(}{)}
\DeclarePairedDelimiter{\sbrack}{[}{]}
\DeclarePairedDelimiter{\cbrack}{\lbrace}{\rbrace}
\setlist[itemize]{leftmargin=*,itemsep=0pt,labelsep=10pt}
\setlist[enumerate]{leftmargin=*,itemsep=0pt,labelsep=10pt}
\numberwithin{equation}{section}
\newtheorem{theorem}{Theorem}[section]
\newtheorem{lemma}[theorem]{Lemma}
\newtheorem{proposition}[theorem]{Proposition}
\crefname{equation}{Eq.}{Eqs.}
\Crefname{equation}{Eq.}{Eqs.}
\crefname{figure}{Figure}{Figures}
\crefname{table}{Table}{Tables}
\crefname{section}{Section}{Sections}
\crefname{lemma}{Lemma}{Lemmata}
\Crefname{algocf}{Algorithm}{Algorithms}
\newcommand{\Cov}{\boldsymbol{\mrm{C}}}
\newcommand{\Cove}{\mrm{C}}
\newcommand{\BSigma}{\boldsymbol{\Sigma}}
\newcommand{\btheta}{\boldsymbol{\theta}}
\newcommand{\bvtheta}{\boldsymbol{\vartheta}}
\newcommand{\Tau}{\mrm{T}}
\newcommand{\BTau}{\boldsymbol{\mrm{T}}}
\newcommand{\BPhi}{\boldsymbol{\Phi}}
\newcommand{\mrm}[1]{\mathrm{#1}}
\newcommand{\Bmrm}[1]{\boldsymbol{\mrm{#1}}}
\begin{document}

\title{Heuristic Quality Coefficients for Interferometric Phase Linking}

\author{Magnus Heimpel $^{1,}$*, Irena Hajnsek $^{1,2}$ and Othmar Frey $^{1,3}$\\\vspace{\baselineskip}\\
$^{1}$ \quad Institute of Environmental Engineering, ETH Zurich, Switzerland\\
$^{2}$ \quad Microwaves and Radar Institute, German Aerospace
Center (DLR), Weßling, Germany\\
$^{3}$ \quad Gamma Remote Sensing AG, Gümligen, Switzerland\\
$^\ast$ \quad Correspondence: mheimpel@ethz.ch
}
\maketitle

\begin{abstract}
In multitemporal InSAR, phase linking (PL) refers to the estimation of a single-reference interferometric phase history for distributed scatterers (DS) from the information contained in the sample coherence matrix. Because the phase information in this matrix is typically inconsistent, DS processing needs practical reliability indicators to decide whether a pixel's PL estimate is sufficiently supported by the data for subsequent deformation analysis. For maximum-likelihood estimation, uncertainty can be quantified via Fisher-information-based covariance estimates, but no analogous, generally applicable uncertainty quantification is available for the broad range of non-ML methods. We propose three heuristic quality coefficients within a unified mathematical framework that covers common PL methods: (1) a method-specific goodness-of-fit coefficient that normalizes the achieved PL objective between a method-consistent upper bound and an empirically modeled noise floor level; (2) a closure phase coefficient computed from the sample coherence matrix in advance; and (3) an ambiguity coefficient that compares the obtained PL estimate with the best alternative in its orthogonal complement in the solution space. All coefficients are normalized to the interval $[0,1]$, where 1 indicates maximum reliability and 0 matches the behavior expected under pure noise. Simulations under exponential and seasonal decorrelation models show that the goodness-of-fit coefficient tracks the normalized absolute phase error most consistently, whereas the closure phase coefficient provides an a priori indicator for pre-screening. Experiments on a TerraSAR-X stack over Visp, Switzerland, reveal plausible spatial patterns across urban and vegetated areas and show that the ambiguity coefficient provides complementary information, especially in regions with temporally varying scattering mechanisms.\vspace{\baselineskip}

\textbf{Keywords:} Distributed scatterers; InSAR; Phase estimation; Phase linking; Quality; Uncertainty
\end{abstract}

\section{Introduction}\label{sec:Introduction}

Interferometric Synthetic Aperture Radar (InSAR) allows the retrieval of surface motion based on measured phase differences between repeated SAR acquisitions. Its relevance lies in its unique ability to provide spatially dense measurements over large areas, with the potential to achieve up to millimeter precision in the line-of-sight direction under favorable conditions. The precision of the phase measurements depends strongly on the interferometric coherence between acquisitions. Decorrelation of the radar targets introduces uncertainty in the interferometric phase, which propagates to the estimated surface deformation. Point-like scatterers with high phase stability that dominate the backscatter within a resolution cell, also known as persistent scatterers (PS), serve as suitable targets to retrieve surface motion~\cite{ferretti2001permanent,werner2003interferometric,mora2003linear,hooper2004new}. Tomographic approaches extend the retrieval to cases in which more than one such target is present in a range-azimuth resolution cell~\cite{lombardini2005differential,fornaro2009four,zhu2010very,siddique2016single,siddique2018sar}. However, in regions with vegetation, especially at shorter wavelengths, stable point-like scatterers are typically sparse. In these cases, statistical analysis of distributed scatterers (DS)~\cite{even2018insar}, which represent the combined return of multiple scatterers that are similar and much smaller than a resolution cell, may be necessary to achieve sufficient coverage across the scene.

DS targets tend to be more prone to decorrelation, and their interferometric phases are generally inconsistent across different acquisition pairs. Therefore, strategies are needed to extract usable phase information from them. A common approach is to use phase linking (PL) methods, which estimate a single-reference interferometric phase history from the phases of all available acquisition pairs. Over the past two decades, a variety of such techniques have been developed. Despite methodological differences, most approaches can be viewed as defining an objective that measures the alignment between a candidate phase history and the observed phases in the sample coherence matrix, and then solving the resulting optimization problem.

Among the most established methods are the small baseline subset (SBAS) method~\cite{berardino2002new,schmidt2003time}; phase triangulation (PT) approaches such as those proposed by Monti-Guarnieri and Tebaldini~\cite{montiguarnieri2008exploitation}, Ferretti et al.~\cite{ferretti2011new}, and Cao et al.~\cite{cao2015mathematical}; and methods based on eigendecomposition (ED) such as CAESAR~\cite{fornaro2015caesar} and EMI~\cite{ansari2018efficient}. More recently, Bai et al.~\cite{bai2023lamie} proposed LaMIE for phase estimation from large-dimensional coherence matrices. Sequential estimators~\cite{ansari2017sequential,mirzaee2023non} extend these approaches to large-scale datasets by sequentially processing subsets of the data stack, which can improve scalability and allow efficient incorporation of new acquisitions.

PL is typically embedded in a longer DS-InSAR processing chain, where only pixels deemed reliable are carried forward to deformation analysis and geophysical interpretation. This makes reliability assessment a central practical step: for a chosen PL method, one needs a scalar indicator that reflects whether the resulting estimate at a pixel is sufficiently supported by the data to warrant inclusion. Several reliability indicators have been proposed to support such decisions. For maximum-likelihood PL, Fisher-information-based covariance estimates can provide uncertainty measures~\cite{zwieback2022reliable}. However, no analogous, generally applicable uncertainty quantification is available for non-ML methods. Ferretti et al.~\cite{ferretti2011new} proposed a phase residual metric for filtering unreliable points in the SqueeSAR framework. Samiei-Esfahany et al.~\cite{samieiesfahany2016phase} incorporated the same metric into their integer least squares estimator and Even and Schulz~\cite{even2018insar} discussed refining it with coherence-based weighting. Ansari et al.~\cite{ansari2018efficient} proposed an indicator derived from the eigenstructure of the coherence matrix in the EMI framework. While effective within their respective settings, these indicators are typically formulated only for the specific estimators in which they were introduced, and their scale depends on the number of acquisitions and the sample size, which complicates threshold selection across different scenarios.

This work develops a unified framework for commonly used PL methods; the key idea is to view PL as a (weighted) rank-one approximation problem on the unit-phasor matrix derived from the sample coherence. Within this framework, we formalize and generalize the goodness-of-fit principle underlying established indicators~\cite{ferretti2011new,even2018insar,ansari2018efficient} and normalize it to improve interpretability. We also propose two additional coefficients targeting complementary aspects of reliability. The proposed coefficients are intended as reliability indicators for a PL result conditional on a chosen estimator; downstream deformation-accuracy validation is beyond the scope of this work. Concretely, our contributions are:
\begin{enumerate}
  \item In \Cref{sec:basis}, we extend the framework of Cao et al.~\cite{cao2015mathematical} and show that a broad range of PL methods can be written in a common optimization form. This clarifies relationships between established estimators and provides a consistent categorization by feasible set (PT vs.\ ED) and weights. The perspective is related to the COFI-PL formulation of Vu et al.~\cite{vu2025covariance}, which interprets PL as a projection onto phase-consistent matrices, while our presentation emphasizes a complementary rank-one approximation viewpoint and accommodates the weighting schemes considered here.
    \item In \Cref{sec:coefficients}, we introduce three normalized scalar quality coefficients: \begin{itemize}
        \item a \emph{goodness-of-fit coefficient}, \Cref{eq:gofcoef}, that normalizes each method's objective using upper and lower bounds together with an empirically modeled noise floor;
        \item a method-independent \emph{closure phase coefficient}, \Cref{eq:clippedcpcoef}, that quantifies internal phase consistency and can be computed directly from the sample coherence matrix;
        \item an \emph{ambiguity coefficient}, \Cref{eq:ambcoef}, that compares the obtained PL estimate to the best alternative in its orthogonal complement in the solution space to assess how distinguished the solution is.
\end{itemize}
    \item In \Cref{sec:experiments}, we evaluate these quality coefficients in two simulation scenarios (exponential and seasonal decorrelation) and on a TerraSAR-X stack over Visp, Switzerland. The experiments assess how well the coefficients track the normalized absolute error in simulation and how they behave spatially for real data across urban and strongly decorrelated vegetated areas.
    \item In Appendix~\ref{sec:solvers}, we discuss practical solvers for PT and propose a robust nonlinear conjugate-gradient approach that is applicable across the PT objectives considered here and is compatible with the constrained optimization required for computing the ambiguity coefficient.
\end{enumerate}

\section{Theoretical Basis}\label{sec:basis}

In the following, we present a mathematical framework that encompasses a range of PL methods and enables us to formulate the quality coefficients in \Cref{sec:coefficients} in a unified manner. For overviews of DS-InSAR and PL, we refer to Even and Schulz~\cite{even2018insar} and Minh and Tebaldini~\cite{minh2023interferometric}, respectively.

\subsection{Coherence Matrix}\label{sec:cohmat}

Consider a stack of $N$ coregistered SLC images. We denote by $\Omega\subset\C^N$ the set of pixels belonging to a particular DS, such that each element of $\Omega$ is an $N$-dimensional vector representing the multitemporal SAR data at the corresponding location.\footnote{Identifying these sets of statistically homogeneous pixels (SHPs) is a nontrivial task in its own right; see~\cite{ferretti2011new,jiang2015fast,dong2018unified}.} We adopt the following simple model\footnote{Some works consider more general settings; see~\cite[p.~4 ff.]{even2018insar}.}: the vectors in $\Omega$ are samples from a zero-mean, circularly symmetric complex normal distribution with covariance matrix $\Cov\in\C^{N\times N}$, as expressed in \Cref{eq:ccg}.
\begin{equation}\label{eq:ccg}
     \forall\,\boldsymbol{\omega}\in\Omega:\quad\boldsymbol{\omega}=\begin{bmatrix} \omega_1 & \cdots & \omega_N \end{bmatrix}^\tran\sim \mathcal{CN}(0, \Cov, 0).
 \end{equation}
 
Moreover, we assume that backscatter is present in every acquisition, so that $0<\Cove_{ii}\in\R$ holds for all indices $i$. The coherence matrix $\BSigma\in\C^{N\times N}$ and the sample coherence matrix $\BTau\in\C^{N\times N}$ are defined (entrywise) by
\begin{equation}\label{eq:coherencedef}
     \Sigma_{ij} \coloneqq \frac{\Cove_{ij}}{\sqrt{\Cove_{ii}}\, \sqrt{\Cove_{jj}}}\qquad\text{and}\qquad\Tau_{ij} \coloneqq \frac{\sum_{\boldsymbol{\omega}\in\Omega}\omega_i\,\overline{\omega_j}}{\sqrt{\sum_{\boldsymbol{\omega}\in\Omega}\vert\omega_i\vert^2}\, \sqrt{\sum_{\boldsymbol{\omega}\in\Omega}\vert\omega_j\vert^2}}.
 \end{equation}
 For simplicity, we restrict attention to $\BTau$ as an estimator\footnote{Estimating $\BSigma$ via $\BTau$ implicitly assumes that $\Omega$ is much smaller than any interferometric fringe at the same location; cf. Zebker and Chen~\cite{zebker2005accurate}. In practice, this may require removing the topographic phase component and any strong phase ramps from other sources, if present.} of $\BSigma$, although many improvements have been proposed~\cite{lopezmartinez2007coherence, wang2016robust, vu2023robust,adam2025improved}.
 
Let $\circ$ denote the Hadamard product, and define the $N$-torus by
\begin{equation}
\mathbb{T}^N\coloneqq\cbrack{\boldsymbol{v}\in\C^N\,\mid\,\forall\,i:\abs{v_i}=1}.
\end{equation}
We adopt the fundamental assumption that the sets
\begin{align}\label{eq:deflinkage}
    \Theta_{\R}&\coloneqq \cbrack*{ \btheta\in{(-\pi,\pi]}^N \,\mid\, \forall\,i,j:\Sigma_{ij}=\abs{\Sigma_{ij}}\exp\!\big(\iu\big(\theta_i-\theta_j\big)\big)} \quad\text{and}\\ \Theta_{\C}&\coloneqq\cbrack*{\bvtheta\in\mathbb{T}^N \,\mid\, \BSigma = \abs{\BSigma}\circ\bvtheta\bvtheta^\herm},
\end{align}
which we refer to as the (real and complex) linkage of $\BSigma$, are nonempty~\cites[Eq.~(5)]{montiguarnieri2008exploitation}[Eq.~(5)]{ferretti2011new}[Eq.~(2)]{vu2023robust}. The inverse entrywise maps $\bvtheta=\exp(\iu\btheta)$ and $\btheta=\angle\bvtheta$ establish a bijection between them. We refer to elements of these sets as the (real and complex) phase histories of $\BSigma$. By definition, $\Theta_{\C}$ (and $\Theta_{\R}$, analogously) is invariant under global phase shifts of the form $\bvtheta\mapsto\exp(\iu\theta_{\mrm{shift}})\bvtheta$. Furthermore, we call $\Theta_{\R}$ and $\Theta_{\C}$ regular if any two phase histories differ only by a global phase shift, i.e.,
\begin{equation}\label{eq:conditionregular}
    \forall\,\bvtheta_1,\bvtheta_2\in\Theta_{\C}:\; \paren[\Big]{\exists\,\theta_{\mrm{shift}}\in(-\pi,\pi]:\; \bvtheta_2=\exp(\iu\theta_{\mrm{shift}})\bvtheta_1}.
\end{equation}
Condition~\eqref{eq:conditionregular} is equivalent to $\operatorname{dim}(\operatorname{span}(\Theta_{\C}))=1$. For example, $\Theta_{\C}$ is nonregular when $\BSigma$ has a block structure, corresponding to a total loss of correlation across the acquisition stack. Now consider estimating InSAR parameters (e.g.,~surface displacement) relative to a chosen reference acquisition with index $p$. To this end, we seek a phase history $\btheta\in\Theta_{\R}$ satisfying $\theta_p=0$. If the linkage is regular, then such a phase history is unique and given by 
\begin{equation}\label{eq:uniquehistory}
    \btheta=\angle\paren*{\overline{\vartheta_p}\,\bvtheta}\;\text{ for arbitrary }\;\bvtheta\in\Theta_{\C}.
\end{equation}

Unlike $\BSigma$, the matrix $\BTau$ typically does not admit a decomposition of the form $\BTau = \abs{\BTau}\circ\bvtheta\bvtheta^\herm$. Otherwise, PL would be trivial. Instead, we define $\BPhi\in\C^{N\times N}$ by
\begin{equation}\label{eq:phasematdef}
\Phi_{ij}\coloneqq\exp\!\big(\iu\phi_{ij}\big),\quad\text{with}\quad\phi_{ij}\coloneqq
\begin{cases}\angle \Tau_{ij},&\abs{\Tau_{ij}}>0\\0,&\abs{\Tau_{ij}}=0\end{cases},
\end{equation}
so that $\BTau=\abs{\BTau}\circ\BPhi$. For a given triplet of indices $(i,j,k)$, we define the closure phase~\cite{dezan2015phase,zheng2022closure} by 
\begin{equation}\label{eq:defclosurephase}
\phi^\Delta_{ijk}\coloneqq\angle(\Phi_{ij}\Phi_{jk}\Phi_{ki})=\angle(\Phi_{ij}\Phi_{jk}/\Phi_{ik}).
\end{equation}
Apart from vanishing entries of $\BTau$, nonzero closure phases correspond either to estimation errors or to scattering processes that are inconsistent with the existence of a phase history~\cite{dezan2018vegetation,zwieback2016statistical}. Moreover, the following holds (\Cref{proposition:rankone}):
\begin{align}\label{eq:implications}
    &\operatorname{rank}\big(\BTau\big) = 1 \; &&\Leftrightarrow
\quad \exists\,\bvtheta\in\mathbb{T}^N:\;\BTau = \bvtheta \bvtheta^{\herm} \; &&\Leftrightarrow
\quad \forall\,i,j:\;\abs{\Tau_{ij}}=1 \nonumber\\ &\;\Downarrow && &&
\\ &\operatorname{rank}\big(\BPhi\big) = 1 \; &&\Leftrightarrow
\quad \exists\,\bvtheta\in\mathbb{T}^N:\;\BPhi = \bvtheta \bvtheta^{\herm} \; &&\Leftrightarrow
\quad \forall\,i,j,k:\;\phi^\Delta_{ijk}=0.\nonumber
\end{align}
This motivates low-rank approximations of $\BPhi$ as a strategy of PL.

\subsection{Phase Linking}\label{sec:phaselinking}
In this work, phase linking refers to the estimation of the linkage $\Theta_{\R}$ (or $\Theta_{\C}$, cf.~\Cref{eq:deflinkage}) of the unknown coherence matrix $\BSigma$, given the sample coherence matrix $\BTau=\abs{\BTau}\circ\BPhi$ and assuming that the linkage is nonempty and regular. We restrict attention to estimators of the form
\begin{equation}\label{eq:defPLmethods}
    \hat{\Theta}_{\R} \coloneqq \cbrack*{\angle\bvtheta \,\mid\,\bvtheta\in\argmax_{\boldsymbol{v}\in\mathcal{X}} \inner*{\boldsymbol{v}\boldsymbol{v}^\herm}{\Bmrm{W}\circ\BPhi}_\frob},
\end{equation}
where
\begin{itemize}
    \item $\Bmrm{W} \in \R^{N \times N}$ is a symmetric matrix,
    \item $\inner*{\Bmrm{A}}{\Bmrm{B}}_\frob = \sum_{i,j}\overline{\mrm{A}_{ij}}\,\mrm{B}_{ij}$ denotes the Frobenius inner product, and
    \item $\mathcal{X}$ is either $\mathbb{T}^N$ or $\mathbb{U}^N\coloneqq\cbrack*{\boldsymbol{v}\in\C^{N}\,\mid\,\norm{\boldsymbol{v}}_2=1\;\wedge\;\forall\,i:v_i\neq 0}$.
\end{itemize}
All methods mentioned in \Cref{sec:Introduction} are of this form (see~\cite{cao2015mathematical}, \Cref{lemma:realvalpt,proposition:lamie}, and \Cref{tab:methodsweightstable}), except for SBAS. However, Cao et al.~\cite[Sec.~II.E.]{cao2015mathematical} show that SBAS can be interpreted as a phase-unwrapped\footnote{We refer to~\cite{yu2019phase} for a description of interferometric phase unwrapping.} analogue of equal-weighted PT, which is included here. Intuitively, the rank-one approximation in \Cref{eq:defPLmethods} follows the same principle as approximating one vector by another in a subset of $\R^N$ by maximizing their Euclidean inner product. Alternatively, most methods can be formulated as projections that minimize a distance function specific to each method~\cite{vu2025covariance}. In the following, we present a simple categorization based on the choice of $\mathcal{X}$ and $\Bmrm{W}$.

\subsubsection{Phase Triangulation and Eigendecomposition}

We define PT and ED methods as the classes of estimators for which $\mathcal{X}$ in \Cref{eq:defPLmethods} equals $\mathbb{T}^N$ and $\mathbb{U}^N$, respectively. As shown in \Cref{lemma:realvalpt}, PT is equivalent to the following unconstrained nonlinear optimization problem:
\begin{equation}\label{eq:defPTreal}
    \hat{\Theta}_{\mrm{PT}} = \argmax_{\boldsymbol{\theta}\in{(-\pi,\pi]}^N} \left\{\sum_{i=1}^N \sum_{j=i+1}^N\mrm{W}_{ij} \cos\big(\phi_{ij} - \big(\theta_i-\theta_j\big)\big)\right\}.
\end{equation}

For $\Bmrm{A}\in\C^{N\times N}$, let $\lambda_{\max}\paren{\Bmrm{A}}$ and $E_{\lambda_{\max}}(\Bmrm{A})$ denote the largest eigenvalue of $\Bmrm{A}$ and the corresponding eigenspace, respectively. By the Rayleigh--Ritz principle and \Cref{lemma:realvalpt},
\begin{equation}
    \inner*{\boldsymbol{v}\boldsymbol{v}^\herm}{\Bmrm{W}\circ\BPhi}_\frob = \boldsymbol{v}^\herm \paren*{\Bmrm{W}\circ\BPhi} \boldsymbol{v} \leq \lambda_{\max}\paren*{\Bmrm{W}\circ\BPhi}
\end{equation}
for all $\boldsymbol{v}\in\mathbb{U}^N$. Consequently, ED methods can be expressed as
\begin{equation}\label{eq:defEDmethods}
    \hat{\Theta}_{\mrm{ED}} \coloneqq \cbrack*{\angle\boldsymbol{v}\,\mid\,\boldsymbol{v}\in E_{\lambda_{\max}}\paren[\big]{\Bmrm{W}\circ\BPhi}\cap\mathbb{U}^N}.
\end{equation}
The constraint $\forall\,i:v_i\neq 0$ in the definition of $\mathbb{U}^N$ ensures that $\angle\boldsymbol{v}$ in \Cref{eq:defPLmethods,eq:defEDmethods} is well-defined. Note that $\hat{\Theta}_{\mrm{ED}}$ may be empty if $E_{\lambda_{\max}}\paren*{\Bmrm{W}\circ\BPhi}$ and $\mathbb{U}^N$ do not intersect.\footnote{An example for $N=3$ is $\Bmrm{W}^{\mrm{CW}}\circ\BPhi=[1, 0.3, 0.3], [0.3, 1, -0.7], [0.3, -0.7, 1]$.} Vu et al.~\cite{vu2025covariance} describe ED methods as relaxations of PT methods, typically at the cost of reduced accuracy.

\subsubsection{Weights}\label{sec:sectionweights}

We categorize the methods in  \Cref{eq:defPLmethods} by the choice of the weight matrix $\Bmrm{W}$. Common choices include
\begin{itemize}
\item maximum-likelihood (ML): \tabto{5cm} $\Bmrm{W}^{\mrm{ML}}\coloneqq-\abs{\BTau}^{-1}\circ\abs{\BTau}$ 
\item coherence-weighted (CW): \tabto{5cm} $\Bmrm{W}^{\mrm{CW}}\coloneqq\abs{\BTau}$
\item equal-weighted (EW): \tabto{5cm} $\Bmrm{W}^{\mrm{EW}}\coloneqq\mathds{1}_{N}$.
\end{itemize}
Note that $\Bmrm{W}^{\mrm{ML}}$ may contain negative entries and therefore is not a weight matrix in the usual sense. If $\abs{\BTau}$ follows an AR(1) structure, then \Cref{prop:ar1_inverse} implies that the induced ML weights are nonnegative.

One may also exclude certain interferometric pairs from the PL process by applying a mask via $\Bmrm{W}_{\Bmrm{M}} \coloneqq \Bmrm{M} \circ \Bmrm{W}$, such as
\begin{equation}
    \mrm{M}_{ij}=
\begin{cases}1, & \abs*{i - j}\leq n\\ 0, & \abs*{i - j} > n
\end{cases}\,,
\quad 1\leq n \leq N \qquad\text{and}\qquad \mrm{M}_{ij}=
\begin{cases}1, & \abs{\Tau_{ij}} \geq \tau \\ 0, & \abs{\Tau_{ij}} < \tau
\end{cases}\,,
\quad \tau\in(0, 1].
\end{equation}
The first example regularizes $\Bmrm{W}\circ\BPhi$ via tapering (cf. Vu et al.~\cite[Eq.~(23)]{vu2025covariance}). The second example suppresses acquisition pairs with low coherence. This is particularly important in the EW case, since the estimation of $\Phi_{ij}$ becomes unstable as $\abs{\Tau_{ij}}$ approaches zero (and when $\Tau_{ij}=0$, $\Phi_{ij}$ is set to 1 by convention in \Cref{eq:phasematdef}). \Cref{tab:methodsweightstable} summarizes how various well-known PL methods are categorized within our framework.

\begin{table}[ht]
    \centering
    \renewcommand{\arraystretch}{1.2}
    \begin{tabular}{@{} l l l l @{}}
        \toprule
        \textbf{Name} & \textbf{References} & \textbf{Weights} & \textbf{Class} \\
        \midrule
        
        Phase Linking
        & \cite{montiguarnieri2008exploitation,ferretti2011new}
        & $\Bmrm{W}^{\mrm{ML}}$
        & PT \\
        
        EMI
        & \cite{ansari2018efficient}
        & $\Bmrm{W}^{\mrm{ML}}$
        & ED \\
        
        PTCM
        & \cite{cao2015mathematical,even2018insar}
        & $\Bmrm{W}^{\mrm{CW}}$
        & PT \\
        
        CAESAR
        & \cite{fornaro2015caesar}
        & $\Bmrm{W}^{\mrm{CW}}$
        & ED \\
        
        Angular (Phase) Synchronization (PT)
        & \cite{singer2011angular,cao2015mathematical}
        & $\Bmrm{W}^{\mrm{EW}}$
        & PT \\
        
        Angular (Phase) Synchronization (ED)
        & \cite{singer2011angular,boumal2016nonconvex}
        & $\Bmrm{W}^{\mrm{EW}}$
        & ED \\
        
        LaMIE
        & \cite{bai2023lamie}
        & $\abs{\BTau}\circ\abs{\BTau}$
        & PT \\
        
        SBAS
        & \cite{berardino2002new,schmidt2003time}
        & $\Bmrm{M}\circ \Bmrm{W}^{\mrm{EW}}$
        & - \\
        \bottomrule
    \end{tabular}
    \caption{Overview of PL methods and their categorization within our framework. As discussed in~\cite[Sec.~II.E.]{cao2015mathematical}, SBAS can be interpreted as a phase-unwrapped analogue of EW-PT with a mask that selects short-baseline acquisition pairs.}
    \label{tab:methodsweightstable}
\end{table}

\section{Quality Coefficients}\label{sec:coefficients}

The primary objective of this work is to develop scalar indicators of reliability, hereafter referred to as quality coefficients, for the PL estimates produced by the algorithms introduced in \Cref{sec:basis}. We assume that $\Theta_{\R}$ is nonempty and regular, and fix a representative $\btheta\in\Theta_{\R}$ via \Cref{eq:uniquehistory} for a chosen reference index $p$. Let $\hat{\btheta}$ denote an estimate of $\btheta$ obtained through PL and referenced to the same index. We define the absolute (angular) error of~$\hat{\btheta}$ as 
\begin{equation}\label{eq:defae}
    \operatorname{AE}(\hat{\btheta}) \coloneqq \norm[\big]{\varepsilon(\hat{\btheta})}_2, \quad \text{where} \quad \varepsilon(\hat{\btheta}) \coloneqq \angle\exp\!\big(\iu\big(\hat{\btheta}-\btheta\big)\big).
\end{equation}

Ideally, one would estimate the expected $\operatorname{AE}(\hat{\btheta})$ from the observed data. To the best of our knowledge, no such estimators exist for the methods discussed in \Cref{sec:basis}. For ML methods, uncertainty quantification can instead be based on the covariance of the PL estimate derived from the Fisher information~\cite{zwieback2022reliable}. In particular, the absolute error and the covariance satisfy
\begin{equation}\label{eq:msecovbias}
    \mathbb{E}\sbrack*{{\operatorname{AE}(\hat{\btheta})}^2} = \operatorname{tr}\paren[\Big]{\operatorname{Cov}\paren[\big]{\varepsilon(\hat{\btheta})}} + \norm[\Big]{\mathbb{E}\big(\varepsilon(\hat{\btheta})\big)}_2^2,
\end{equation}
provided that the second moments exist, where the expectation is taken under the data-generating distribution parameterized by $\BSigma$ (i.e.,~over the randomness of the observations, and thus of $\BTau$ and $\hat{\btheta}$, conditional on $\BSigma$). For the other non-ML methods in \Cref{sec:basis}, analogous uncertainty estimators are not available. Moreover, bootstrap-based strategies are computationally expensive and may be unreliable at finite sample sizes.

While ML methods are asymptotically optimal under their model assumptions, for smaller sample sizes and for departures from these assumptions, non-ML methods can perform better~\cite[p.~23]{even2018insar}. This motivates reliability indicators for non-ML phase linking estimates to support informed DS pixel selection. Indeed, several works propose such indicators~\cites[Eq.~(16)]{ferretti2011new}[Eq.~(23)]{even2018insar}[Eq.~(18)]{ansari2018efficient}. Here, we formalize and generalize the heuristic principles underlying these indicators and propose alternatives.

We further define the normalized absolute error of~$\hat{\btheta}$ as
\begin{equation}\label{eq:defnae}
    \operatorname{nAE}(\hat{\btheta})\coloneqq\sqrt{\frac{3}{N\pi^2}}\operatorname{AE}(\hat{\btheta}),
\end{equation}
where the scaling is chosen such that $\mathbb{E}[\operatorname{nAE}(\hat{\btheta})]=1$ when the componentwise phase errors are i.i.d.\ uniform on $(-\pi,\pi]$. Nominally, we seek quality coefficients $\gamma\in\sbrack{0,1}$ with the following behavior:\pagebreak
\begin{align}
    &\BTau = \BSigma\;\wedge\,\BSigma\neq\Bmrm{I}_N\quad\Rightarrow\quad\gamma=1,\label{eq:bestcase}\\
    &\BSigma = \Bmrm{I}_N\quad\Rightarrow\quad\gamma=0,\label{eq:worstcase}\\
    &\operatorname{corr}\paren*{1 - \gamma,\;\operatorname{nAE}(\hat{\btheta})}>0,\label{eq:corrnae}
\end{align}
where the correlation is taken with respect to the distribution of $\BSigma$ across scenarios, together with the conditional data-generating distribution given $\BSigma$.

\Cref{eq:bestcase} states that, in the absence of coherence matrix estimation errors, the coefficient should be one. Assuming that $\Theta_{\R}$ is nonempty and regular, the phase history is then identifiable (up to the chosen reference) when $\BTau=\BSigma$. Conversely, \Cref{eq:worstcase} states that when the data is indistinguishable from noise, the coefficient should be zero. Since coherence matrix estimation becomes increasingly erroneous as $\BSigma$ approaches $\Bmrm{I}_N$, these statements do not contradict each other.

In the following, we propose three coefficients designed to exhibit the behavior above; we assess \Cref{eq:corrnae} empirically via simulations. Throughout, we interpret $\gamma$ as a reliability indicator for a fixed PL method. We do not treat differences in $\gamma$ across phase linking methods as a principled basis for method ranking. Using reliability indicators for method selection may be possible in principle, but it would require a separate evaluation and is beyond the scope of this work.

\subsection{Goodness-of-Fit Coefficient}\label{sec:gof}
In \Cref{sec:basis}, we interpreted PL methods as (weighted) rank-one approximations of the unit phasor matrix $\BPhi$. For a regular nonempty linkage (\Cref{eq:deflinkage}) in the absence of coherence matrix estimation errors (i.e.,~$\BTau=\BSigma$), it follows that $\BPhi$ has rank one itself, so both the approximation and the recovered phase histories are exact. This motivates measuring the quality of PL by how well $\BPhi$ is approximated.

A natural choice is to divide the achieved objective value by an upper bound on that objective; we refer to such a normalized quantity as a goodness-of-fit indicator. For instance, Ferretti et al.~\cite[Eq.~(16)]{ferretti2011new} introduced the indicator
\begin{equation}\label{eq:ferretticoef}
    \gamma_{\mrm{PTA}} \coloneqq \frac{2}{N^2-N}\sum_{i=1}^N\sum_{j=i+1}^N\cos\big(\phi_{ij} - (\hat{\theta}_i-\hat{\theta}_j)\big),
\end{equation}
which is the normalized objective in \Cref{eq:defPTreal} for $\Bmrm{W}^{\mrm{EW}}$. They retain pixels for further processing only when this coefficient exceeds a prescribed threshold. Even and Schulz~\cite[Eq.~(23)]{even2018insar} refine this measure by incorporating coherence magnitudes as weights:
\begin{equation}
    \gamma_{\mrm{PTA}w}\coloneqq\paren*{\sum_{i=1}^N\sum_{j=i+1}^N\abs{\Tau_{ij}}}^{-1}\sum_{i=1}^N\sum_{j=i+1}^N\abs{\Tau_{ij}}\cos\big(\phi_{ij} - (\hat{\theta}_i-\hat{\theta}_j)\big),
\end{equation}
which is the same normalized objective for $\Bmrm{W}^{\mrm{CW}}$.
For EMI, Ansari et al.~\cite[Eq.~(18)]{ansari2018efficient} propose
\begin{equation}
    \lambda \coloneqq \min_{\boldsymbol{v}\in\C^N} \left\{ \dfrac{-1}{\norm*{\boldsymbol{v}}_2^2} \sum_{i=1}^N\sum_{j=i+1}^N \Bmrm{W}^{\mrm{ML}}_{ij} \abs{v_i}\abs{v_j}\cos\big(\phi_{ij} - (\angle v_i-\angle v_j)\big)\right\},
\end{equation}
which is consistent with the ML-weighted objective in \Cref{eq:defEDmethods}.

As noted by Cao et al.~\cite{cao2015mathematical}, the coefficient in \Cref{eq:ferretticoef} corresponds to $\Bmrm{W}^{\mrm{EW}}$, whereas Ferretti et al.~\cite{ferretti2011new} actually use $\Bmrm{W}^{\mrm{ML}}$ in the corresponding PT estimator. In this work, we instead assess the goodness-of-fit of a PL method using the objective it optimizes, together with its associated weights. This choice is still not unique, because any strictly monotone transformation of the objective yields an equivalent optimization problem with the same maximizers. We therefore restrict attention to the natural objective in \Cref{eq:defPTreal} for PT methods and to $\lambda_{\max}(\Bmrm{W}\circ\BPhi)$ for ED methods.

While the above examples relate an achieved objective value to an upper bound, they do not compare it to a lower bound. Moreover, for similar decorrelation scenarios, the achieved objective typically decreases when either the number of acquisitions $N$ or the sample size $W=\abs{\Omega}$ increases, which hampers interpretability across stacks of different sizes. To address this, we introduce a lower reference level based on the expected objective value under $\BSigma=\Bmrm{I}_N$, i.e.,~a noise floor level that depends on $(N,W)$.

Given a PL method and sample coherence matrix $\BTau$, we denote by
\begin{align*}
    &f\big(\BTau, {\hat{\btheta}}\mkern1mu\big)\quad&&\text{the achieved objective function value};\\
    &f_{\max}\big(\BTau\big)\quad&&\text{an upper bound for }f\big(\BTau, {\hat{\btheta}}\mkern1mu\big)\text{ that is attained in the idealized case where $\BTau=\BSigma$};\\
    &f_{\min}\big(\BTau\big)\quad&&\text{a lower bound for }f\big(\BTau, {\hat{\btheta}}\mkern1mu\big).
\end{align*}
\Cref{tab:gof_bounds} summarizes the expressions and bounds used here; analytical derivations are given in \Cref{prop:gof_bounds}. For ML-PT, the stated upper bound is attained in the idealized case $\BTau=\BSigma$ only when the ML weights are entrywise nonnegative (e.g.,~for AR(1)-type coherence magnitudes; cf.\ \Cref{prop:ar1_inverse}). 
In the presence of negative ML weights, the bound remains valid but not necessarily sharp for $\BTau=\BSigma$. We obtain the coefficient via the two-step normalization
\begin{align}
&\mathcal{F}\big(\BTau, {\hat{\btheta}}\mkern1mu\big) \coloneqq \frac{f\big(\BTau, {\hat{\btheta}}\mkern1mu\big) - f_{\min}\big(\BTau\big)}{f_{\max}\big(\BTau\big) - f_{\min}\big(\BTau\big)}\in\sbrack*{0,1}, \label{eq:intermediatestep}\\
&\mathcal{F}_{\mrm{noise}} \coloneqq \mathbb{E}\sbrack*{\mathcal{F}\big(\BTau, {\hat{\btheta}}\mkern1mu\big)\;\big\vert \;\BSigma=\Bmrm{I}_N},\\
&\gamma_{\mrm{GOF}}\big(\BTau, {\hat{\btheta}}\mkern1mu\big) \coloneqq \max\cbrack*{\frac{\mathcal{F}\big(\BTau, {\hat{\btheta}}\mkern1mu\big) - \mathcal{F}_{\mrm{noise}}}{1 - \mathcal{F}_{\mrm{noise}}}, \;0} \in\sbrack*{0,1}.\label{eq:gofcoef}
\end{align}

For most methods, a fixed lower bound is sufficient, since $\mathcal{F}$ remains, in expectation, substantially smaller under $\BSigma=\Bmrm{I}_N$ than under typical signal scenarios. For ML-ED, however, this ordering can be reversed, so a fixed lower bound may be overly optimistic; in that case, the lower bound must be chosen adaptively as a function of $\BTau$.

The noise floors $\mathcal{F}_{\mrm{noise}}$ are difficult to deduce analytically. Instead, we approximate them via Monte Carlo simulation. Specifically, we evaluate \Cref{eq:intermediatestep} on datasets consisting of circularly symmetric complex Gaussian white noise, for stack sizes $N$ ranging from 4 to 100 acquisitions and sample sizes $W=\abs{\Omega}\label{W}$ ranging from 9 to 7000 SHPs, yielding a total of 400 combinations of stack and sample sizes. For ML methods, sample sizes smaller than $3N$ are omitted, since these methods require inversion of $\BTau$. We then fit the model
\begin{equation}\label{eq:noisemodel}
     \widehat{\mathcal F}_{\mrm{noise}}(N,W)\coloneqq\sigma\big(\alpha+\beta_s N^{-p}+\beta_w W^{-q}+\beta_{sw}N^{-p}W^{-q}\big),
 \end{equation}
 where $\sigma(x)\coloneqq(1+\exp(-x))^{-1}$ is the logistic sigmoid, to the simulated values by nonlinear least squares, obtaining the parameters reported in \Cref{tab:rational_fit_params}. These parameters can be used to approximate $\mathcal{F}_{\mrm{noise}}$ for arbitrary $(N,W)$ within the simulated range. Together with \Cref{tab:gof_bounds} and \Cref{eq:intermediatestep,eq:gofcoef}, this yields a practical procedure for computing the goodness-of-fit coefficient. By construction, the resulting coefficient satisfies \Cref{eq:bestcase,eq:worstcase} via the normalization with respect to the noise floor.

\begin{table}[ht]
\renewcommand{\arraystretch}{1.5}  
\centering
\begin{tabularx}{\textwidth}{l X l l l}
\toprule
\textbf{Method} & \textbf{Weights} &
\boldmath$f\big(\BTau, {\hat{\btheta}}\mkern1mu\big)$ & \boldmath$f_{\max}(\BTau)$ & \boldmath$f_{\min}(\BTau)$ \\
\midrule
PT & all &
$\displaystyle\sum_{i=1}^N \sum_{j=i+1}^N \mrm{W}_{ij} \cos(\phi_{ij} - (\hat{\theta}_i - \hat{\theta}_j))$ & 
$\displaystyle\sum_{i=1}^N \sum_{j=i+1}^N |\mrm{W}_{ij}|$ & 
$0$ \\[6pt]

ED & $\Bmrm{W}^{\mrm{CW}}$ &
$\lambda_{\max}\bigl(\BTau\bigr)$ & $\lambda_{\max}\bigl(\abs{\BTau}\bigr)$ & $1$ \\

ED & $\Bmrm{W}^{\mrm{EW}}$ &
$\lambda_{\max}\bigl(\BPhi\bigr)$ &
$\displaystyle N $ & $1$ \\

ED & $\Bmrm{W}^{\mrm{ML}}$ &
$\lambda_{\max}\bigl(-\lvert\BTau\rvert^{-1}\circ\BTau\bigr)$ &
$-1$ &
$\displaystyle N^{-1}\operatorname{tr}\bigl(-\lvert\BTau\rvert^{-1}\circ\BTau\bigr)$ \\
\bottomrule
\end{tabularx}
\caption{Summary of expressions to compute $\mathcal{F}$ as in \Cref{eq:intermediatestep} for PT and ED
methods with the basic weights discussed in \Cref{sec:basis}. See \Cref{prop:gof_bounds} for a detailed derivation.}
\label{tab:gof_bounds}
\end{table}

\begin{table}[ht]
\centering
\renewcommand{\arraystretch}{1.2}
\setlength{\tabcolsep}{12pt} 
\begin{tabularx}{\textwidth}{X X X X X X X @{}}
\toprule
 & \multicolumn{3}{c}{\textbf{Phase triangulation}}
 & \multicolumn{3}{c}{\textbf{Eigendecomposition}} \\
\cmidrule(lr){2-4} \cmidrule(lr){5-7}
 & $\Bmrm{W}^{\mrm{EW}}$ & $\Bmrm{W}^{\mrm{CW}}$ & $\Bmrm{W}^{\mrm{ML}}$
 & $\Bmrm{W}^{\mrm{EW}}$ & $\Bmrm{W}^{\mrm{CW}}$ & $\Bmrm{W}^{\mrm{ML}}$ \\
\midrule
$\alpha$        & -2.5789 & -2.5108 & -2.7937 & -2.6572 & -22.931 & -1.4451 \\
$\beta_s$       &  5.8898 &  6.1917 &  6.9665 &  5.9320 &  15.603 &  5.9659 \\
$p$             &  0.37061 &  0.37605 &  0.38582 &  0.32869 &  0.056864 &  0.44218 \\
$\beta_w$       &  3.6568 &  3.8003 &  2.9138 &  4.3057 &  21.476 &  8.3068 \\
$q$             &  0.50406 &  0.42735 &  0.18221 &  0.49705 &  0.091900 &  0.45487 \\
$\beta_{sw}$    & -4.7641 & -4.7640 & -2.3920 & -5.7196 & -13.270 & -10.9316 \\
$\operatorname{RMSE}$ & 5.208e-3 & 5.105e-3 & 6.446e-3 & 4.956e-3 & 9.320e-4 & 4.879e-3 \\
\bottomrule
\end{tabularx}
\caption{Fitted parameters and fit accuracy for the empirical noise floor models of PT and ED methods with the basic weights discussed in \Cref{sec:basis}. For each method, the estimated noise level is modeled as in \Cref{eq:noisemodel}. The first six rows list the fitted parameters $\alpha,\beta_s,p,\beta_w,q,\beta_{sw}$ for each method. The final row reports the ordinary root mean squared error ($\operatorname{RMSE}$) of the fitted model against the Monte Carlo estimates of $\mathcal{F}_{\mrm{noise}}$ obtained from stacks of circularly symmetric complex Gaussian noise.}
\label{tab:rational_fit_params}
\end{table}

\subsection{Closure Phase Coefficient}

As described in \Cref{sec:basis}, when $\BTau=\BSigma$ (i.e.,~in the absence of coherence matrix estimation errors), the closure phases vanish. Conversely, nonzero closure phases in the sample coherence matrix indicate either the presence of such errors or violations of the assumption in \Cref{eq:deflinkage} (cf.~\cite{zwieback2016statistical}). This motivates a quality coefficient based on closure phase. We define
\begin{equation}
    {\BPhi}^\Delta\coloneqq{\binom{N}{3}}^{-1}\sum_{i< j< k}\cos\paren*{\phi^\Delta_{ijk}}.
\end{equation}
By the same reasoning as above, ${\BPhi}^\Delta=1$ when $\BTau=\BSigma\neq\Bmrm{I}_N$. In the $\BSigma=\Bmrm{I}_N$ scenario, the phases $\phi_{ij}$ are i.i.d.\ uniform on $(-\pi,\pi]$ and since $\cos\paren{\phi^\Delta_{ijk}}=\cos\paren{\phi_{ij} + \phi_{jk} + \phi_{ki}}$, the following holds:
\begin{equation}\label{eq:ezero}
    \forall\,i <j:\;\phi_{ij}\sim\mathcal{U}(-\pi,\pi] \qquad\xRightarrow{\text{\Cref{irwinhall}}}\qquad \mathbb{E}\sbrack*{{\BPhi}^\Delta}=0.
\end{equation}
Thus, ${\BPhi}^\Delta$ satisfies \Cref{eq:bestcase,eq:worstcase} and can, postponing the verification of \Cref{eq:corrnae}, be used as a quality coefficient. For a triplet of index pairs $(i,j)$, $(j,k)$ and $(i,k)$ that are subject to strong decorrelation, i.e.,~$\abs{\Sigma_{ij}},\abs{\Sigma_{jk}},\abs{\Sigma_{ik}}\ll 1$, nonzero closure phases are expected. Nevertheless, PL methods that incorporate coherence information may still perform accurately. For example, if $\abs{\BTau}$ follows an $\operatorname{AR}(1)$ correlation model, then \Cref{prop:ar1_inverse} implies that the ML weights vanish for all non-adjacent acquisitions, i.e.\ $\mrm{W}_{ij}=0$ for $\abs{i-j}>1$. Thus, the ML-PT objective emphasizes temporally adjacent pairs exclusively and, if the first-superdiagonal entries of $\abs{\BTau}$ are large, accurate estimates may still be obtained. We account for this by introducing a CW variant,
\begin{equation}\label{eq:defwcpcoef}
    {\BTau}^\Delta\coloneqq\frac{\sum_{i< j< k}\abs{\Tau_{ij}\Tau_{jk}\Tau_{ki}}\cos\paren*{\phi^\Delta_{ijk}}}{\sum_{i< j< k}\abs{\Tau_{ij}\Tau_{jk}\Tau_{ki}}}=\frac{\sum_{i< j< k}\Re\paren{\Tau_{ij}\Tau_{jk}\Tau_{ki}}}{\sum_{i< j< k}\abs{\Tau_{ij}\Tau_{jk}\Tau_{ki}}}.
\end{equation}
Here, \Cref{eq:bestcase} still holds, and it can be shown that \Cref{eq:worstcase} holds approximately except for very small sample sizes. We define the equal-weighted and the coherence-weighted closure phase coefficient as
\begin{equation}\label{eq:clippedcpcoef}
\gamma_{\mrm{CP}}\big(\BPhi\big) \coloneqq \max\cbrack[\Big]{{\BPhi}^\Delta, 0} \quad\text{and}\quad \gamma_{\mrm{CPw}}\big(\BTau\big)\coloneqq\max\cbrack[\Big]{{\BTau}^\Delta, 0},
\end{equation}
respectively. These coefficients provide an a priori indicator of PL quality that can be computed directly from the sample coherence matrix. This enables efficient pre-selection of reliable pixels before running the potentially computationally expensive PL.

\subsection{Ambiguity Coefficient}\label{standardinner}

As described in Cao et al.~\cite[p.~2]{cao2015mathematical}, the fundamental idea of ML methods is to maximize the log-likelihood function:
\begin{equation}\label{eq:loglikelihood}
    \hat{\Theta}_{\C} = \argmax_{\bvtheta\in\mathbb{T}^N}\ell(\bvtheta) \quad\text{with}\quad \ell(\bvtheta) \coloneqq \operatorname{log}\paren*{p\paren*{\BTau\;\big\vert \;\BSigma=\abs{\BTau}\circ\bvtheta\bvtheta^\herm}}.
\end{equation}
The uncertainty of ML-PL estimates can be quantified via the relationship between the Hessian of $\ell(\bvtheta)$ at the maximizer (i.e.,~the local curvature of the log-likelihood) and the covariance matrix of the estimate~\cite{zwieback2022reliable}. Intuitively, the sharper the peak of $\ell$ around its maximum, the more sensitive $\ell$ is at its maximum to changes in $\bvtheta$ and thus the smaller the estimation uncertainty.

A natural way to generalize this idea to non-ML methods is to replace $\ell$ by the objective in \Cref{eq:defPLmethods}, which we denote by $f$. Unfortunately, closed-form expressions are generally unavailable, and the Hessian of $f$ is often impractical to compute numerically. We therefore adopt a related rationale: for a high-certainty PL estimate $\hat{\bvtheta}\in\mathbb{T}^N$, the objective $f$ should attain a substantially larger value at $\hat{\bvtheta}$ than at vectors in $\mathbb{T}^N$ that are significantly different from $\hat{\bvtheta}$. To formalize this notion, we consider two vectors $\bvtheta_1,\bvtheta_2\in\mathbb{T}^N$ significantly different if their standard complex inner product vanishes, $\inner*{\bvtheta_1}{\bvtheta_2}=0$, or, equivalently, if $\bvtheta_2$ lies in the orthogonal complement of $\bvtheta_1$, which we denote by $\cbrack{\bvtheta_1}^\perp\label{orthcomplement}$.

In the worst case, $f$ attains the same global maximum at two orthogonal vectors, implying that the set of maximizers $\hat{\Theta}_{\C}$ is not one-dimensional, i.e.,~$\operatorname{dim}(\operatorname{span}(\hat{\Theta}_{\C}))>1$. In the idealized case $\BTau=\BSigma$, this may indicate a nonregular linkage in the sense of \Cref{eq:conditionregular}; for PT, it is in fact an implication. As an example, consider the following sample coherence matrix:
\begin{equation}\label{eq:examplemat}
    \BTau=\begin{pmatrix}
1 & 1 & 1 & 0.12 & 0.20\,\alpha & 0.10\,\alpha^{2} \\
1 & 1 & 1 & 0.10\,\alpha^{2} & 0.09 & 0.05\,\alpha \\
1 & 1 & 1 & 0.05\,\alpha & 0.10\,\alpha^{2} & 0.09 \\
0.12 & 0.10\,\alpha & 0.05\,\alpha^{2} & 1 & 1 & 1 \\
0.20\,\alpha^{2} & 0.09 & 0.10\,\alpha & 1 & 1 & 1 \\
0.10\,\alpha & 0.05\,\alpha^{2} & 0.09 & 1 & 1 & 1
\end{pmatrix}
,
\end{equation}
where $\alpha\coloneqq\exp(2\pi\iu/3)$. Computations show that, for CW-PT, the objective admits a set of global maximizers with more than one dimension: for every $\varphi\in(-\pi,\pi]$, the vector $\hat{\btheta}\paren{\varphi}\coloneqq(0,0,0,\varphi,\varphi,\varphi)^\tran$ maximizes the objective. In particular, the associated vectors in $\mathbb{T}^N$ for $\varphi=0$ and $\varphi=\pi$ are orthogonal in the standard complex inner product. Hence, the PL estimate is completely ambiguous in this example.

For PT methods, an additional difficulty is that finding a global maximizer of the nonconvex problem in \Cref{eq:defPTreal} is NP-hard~\cite[Proposition~3.5]{zhang2006complex}. For this reason, Singer advocates relaxations such as ED~\cite{singer2011angular}. Accordingly, PT is in practice typically solved with iterative methods that guarantee convergence only to a local optimum, and it is implicitly assumed that, for a suitable initialization, the obtained solution is globally optimal~\cite{montiguarnieri2008exploitation,ferretti2011new}. Comparing the resulting estimate to other candidates in the solution space can help diagnose when this assumption fails.

We propose a quality coefficient based on the rationale of the previous paragraphs. Let $\mathcal{X}^{(1)}$ denote the feasible set, i.e.,~$\mathcal{X}^{(1)}=\mathbb{T}^N$ for PT methods and $\mathcal{X}^{(1)}=\mathbb{U}^N$ for ED methods. Define
\begin{equation}\label{eq:deforthEDPT}
     \hat{\Theta}_{\R}^{(1)} \coloneqq \cbrack*{\angle\bvtheta\,\mid\,\bvtheta\in\argmax_{\boldsymbol{v}\in\mathcal{X}^{(1)}}\inner*{\boldsymbol{v}\boldsymbol{v}^\herm}{\Bmrm{W}\circ\BPhi}_\frob},
\end{equation}
i.e., as in \Cref{eq:defPLmethods}, and fix a representative $\hat{\btheta}^{(1)}\in \hat{\Theta}_{\R}^{(1)}$ via \Cref{eq:uniquehistory} for a chosen reference index $p$. Then set
\begin{equation}
\mathcal{X}^{(2)}\coloneqq \mathcal{X}^{(1)}\cap \cbrack*{\exp\big(\iu\hat{\btheta}^{(1)}\big)}^\perp,
\qquad \hat{\Theta}_{\R}^{(2)} \coloneqq \cbrack*{\angle\bvtheta \,\mid\, \bvtheta\in\argmax_{\boldsymbol{v}\in\mathcal{X}^{(2)}} \inner*{\boldsymbol{v}\boldsymbol{v}^\herm}{\Bmrm{W}\circ\BPhi}_\frob},
\end{equation}
and also fix a representative $\hat{\btheta}^{(2)}\in\hat{\Theta}_{\R}^{(2)}$. We refer to $\hat{\btheta}^{(1)}$ and $\hat{\btheta}^{(2)}$ as the primary and secondary PL estimates, respectively. By construction, the secondary estimate is orthogonal to the primary in the complex inner product. Moreover, with $\mathcal{F}$ as defined in \Cref{eq:intermediatestep}, we denote $\mathcal{F}^{(i)}\coloneqq \mathcal{F}\big(\BTau, \hat{\btheta}^{(i)}\big)$.

For high noise levels, $\mathcal{F}^{(1)}$ may be close to the noise floor, and $\mathcal{F}^{(2)}$ may even fall below it. In such cases, a naive coefficient of the form $\paren{\mathcal{F}^{(1)}-\mathcal{F}^{(2)}}\slash\paren{\mathcal{F}^{(1)}-\mathcal{F}_{\mrm{noise}}}$, i.e.,~the relative difference between the two goodness-of-fit values, can become large, although this does not indicate a reliable phase estimate. The dampening constant $\mu>0$ and the use of $\max$ operators in the following definition mitigate this issue:
\begin{equation}\label{eq:ambcoef}
    \gamma_{\mrm{A}}\coloneqq
    \frac{\mathcal{F}^{(1)}-\max\cbrack*{\mathcal{F}^{(2)},\,\mathcal{F}_{\mrm{noise}}}}
         {\max\cbrack*{\mu,\,\mathcal{F}^{(1)}-\mathcal{F}_{\mrm{noise}}}}.
\end{equation}

We call $\gamma_{\mrm{A}}$ the ambiguity coefficient. By construction, it satisfies \Cref{eq:worstcase}. In general, \Cref{eq:bestcase} cannot be guaranteed; however, our experiments indicate that $\gamma_{\mrm{A}}$ captures the rationale above and can serve as a useful criterion for DS-InSAR pixel selection. The parameter $\mu$ is chosen such that $1-\gamma_{\mrm{A}}$ aligns with the normalized absolute error at high noise levels in our experiments; empirically, we find that $\mu=1/3$ yields reasonable results.

In practice, the ambiguity coefficient is easiest to compute for ED methods. The primary PL solution can be obtained using standard numerical routines (e.g.,~LAPACK's \texttt{zheevr}) that return the largest eigenvalue and a corresponding eigenvector. Moreover, the same eigensolver pass can typically provide the second-largest eigenvector at negligible additional cost. In contrast, for PT methods, obtaining an orthogonal secondary solution requires solving a constrained nonlinear optimization problem (see \Cref{sec:solvers}), which is substantially more expensive from a computational perspective.

\section{Experiments}\label{sec:experiments}

\subsection{Data}

To assess \Cref{eq:corrnae} for the proposed quality coefficients, we use simulated data under two decorrelation scenarios: exponential decorrelation and seasonal decorrelation; see~\cites[p.~20]{even2018insar}[p.~4]{mirzaee2023non}. Specifically, we draw a phase history $\btheta\in{(-\pi,\pi]}^N$ with independent components $\theta_i\sim\mathcal{U}(-\pi,\pi]$ and define the corresponding phase vector $\bvtheta\coloneqq\exp(\iu\btheta)$. We then construct a covariance matrix of the form
\begin{equation*}
\Cov=\abs{\Cov}\circ\bvtheta\bvtheta^\herm,
\end{equation*}
where the magnitude $\abs{\Cov}$ encodes temporal and seasonal decorrelation via
\begin{equation}\label{eq:absc_longterm}
\forall\,i,j\in\{1,\dots,N\}:\quad
\abs{\Cov}_{ij} = (\eta_0-\eta_\infty)\,\zeta^{\abs{i-j}}\,\Xi_{ij}+\eta_\infty.
\end{equation}
The term $\Xi_{ij}$ introduces seasonal modulation through
\begin{equation}
\Xi_{ij}\coloneqq\paren*{A + B\cos\paren*{\tfrac{2\pi}{P}i}}\paren*{A + B\cos\paren*{\tfrac{2\pi}{P}j}},
\end{equation}
with
\begin{equation}
A\coloneqq\frac{\sqrt{\xi_{\mrm{high}}} + \sqrt{\xi_{\mrm{low}}}}{2},
\qquad
B\coloneqq\frac{\sqrt{\xi_{\mrm{high}}} - \sqrt{\xi_{\mrm{low}}}}{2}.
\end{equation}

The chosen parameter settings are summarized in \Cref{tab:sim_scenarios}. For each scenario and each noise level $\nu=k/50$ with $k=0,1,\dots,50$, we draw $10^6$ samples from $\mathcal{CN}\big(0,\,(1-\nu)\,\Cov+\nu\,\Bmrm{I}_N,\,0\big)$ and arrange them into an SLC stack of size $(1000\times1000\times N)$, where we use the first SLC as the reference.

\begin{table}[ht]
    \centering
    \renewcommand{\arraystretch}{1.2}
    \begin{tabular}{@{} l c c c c c c c @{}}
        \toprule
        \textbf{Scenario} & N & $\eta_0$ & $\eta_\infty$ & $\zeta$ & $P$ & $\xi_{\mrm{high}}$ & $\xi_{\mrm{low}}$ \\
        \midrule
        
        Exponential decorrelation
        & 50 & 1 & 0.025 & 0.85 & 1 & 1 & 1 \\
        
        Seasonal decorrelation
        & 50 & 1 & 0.05 & 0.9 & 12 & 1 & 0.1 \\
        
        \bottomrule
    \end{tabular}
    \caption{Parameter settings for the simulated decorrelation scenarios. $P=1$ implies no seasonality in the simple case.}
    \label{tab:sim_scenarios}
\end{table}

For the real-data experiments, we use a time series of 44 coregistered TerraSAR-X SLC images over Visp, Switzerland, acquired during the summers of 2017--2022 in descending orbit and HH polarization; see \Cref{tab:acq_dates}. The acquisition of 11 Sep 2019 is used as the reference for subsequent processing. The selected area covers approximately $2\,\mrm{km}\times 2\,\mrm{km}$ and includes an urban part with stable scatterers (e.g.,~roads and buildings) as well as agricultural areas and forests. Apart from man-made structures, the interferometric stack exhibits strong temporal decorrelation, which is evident in the single-look differential interferograms in \Cref{fig:visp_rmli_mapds_infts}. Several agricultural fields and grassland areas, in particular a well-maintained stadium field, are plausible candidates for distributed scatterers. To account for topography, we use SwissALTI3D\footnote{\url{https://www.swisstopo.admin.ch/en/height-model-swissalti3d}} (\textcopyright~swisstopo), a publicly available $0.5\,\mrm{m}$ resolution DEM. 

\begin{figure}
    \centering
    \includegraphics[width=\linewidth]{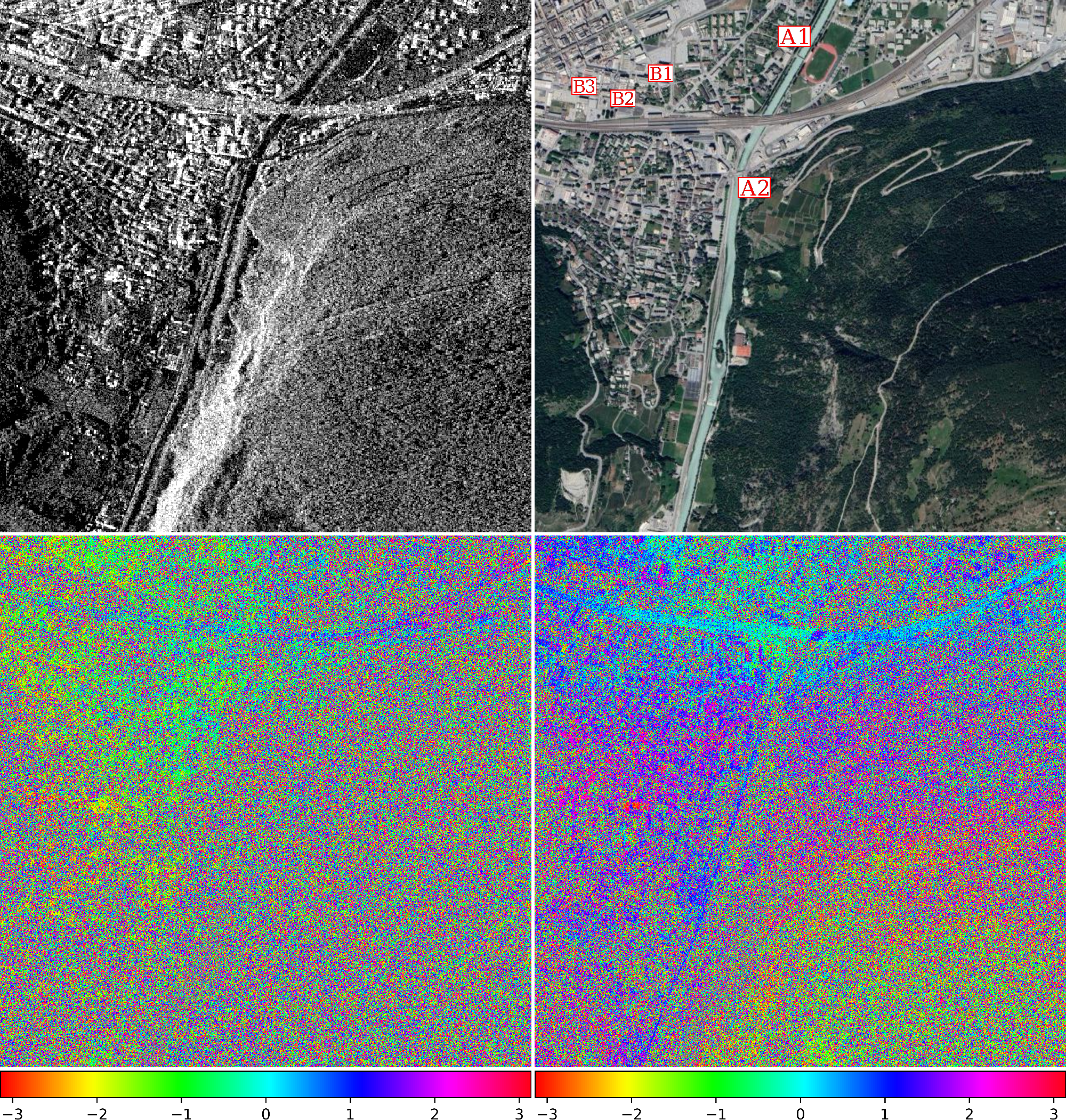}
    \caption{Region of interest within the TerraSAR-X dataset over Visp, Switzerland. Top left: magnitude of the SLC acquired on 11 Sep 2019 (reference). Top right: orthophoto of the scene (\textcopyright~CNES, Spot Image, swisstopo, NPOC, \url{https://map.geo.admin.ch/}) with markers for locations discussed in \Cref{sec:real_results}. Bottom left: single-look differential interferometric phase for 30 Jun 2017 vs.\ 11 Sep 2019 (reference). Bottom right: single-look differential interferometric phase for 31 Aug 2019 vs.\ 11 Sep 2019 (reference).}
    \label{fig:visp_rmli_mapds_infts}
\end{figure}

\subsection{Processing Steps}

The simulated data requires no additional preprocessing. The preprocessing of the real SAR data and the DEM is performed using the GAMMA Remote Sensing software. First, a reference scene is geocoded using simulated SAR images, following~\cite{frey2013dem,wegmullerautomated}. Next, the SwissALTI3D orthometric DEM heights (above the geoid) are converted to ellipsoidal heights using the geoid undulation provided by the Swiss Federal Office of Topography (\textcopyright~swisstopo).\footnote{\url{https://www.swisstopo.admin.ch/en/geoid-en}} The DEM is then mapped into SAR geometry using the lookup table obtained in the geocoding step. The SLCs are subsequently cropped and coregistered to the reference SLC. Next, we subtract from the phases of all SLCs the reference SLC phase and the corresponding simulated topographic interferometric phase, computed from the DEM and precise orbit state vectors. This reduces high-frequency fringes that would otherwise impede coherence estimation. Forming these quasi-interferograms in advance is equivalent to first computing the sample coherence matrix from the SLCs and then simulating and subtracting the topographic phase for each of its entries. The former is, however, considerably faster, since it requires only $N - 1$ simulated phases.

For the resulting real and simulated SLC stacks, processing is performed pixel-wise, i.e.,~each pixel is processed independently of all others.\footnote{In parallel-computing terminology, the workflow is therefore \emph{embarrassingly parallel}.} On an architecture providing 48 parallel threads, we partition the computation into jobs consisting of all pixels at a fixed azimuth position. For each pixel, we follow a workflow similar to Ferretti et al.~\cite{ferretti2011new}. First, we identify all pixels within a $(w\times w)$ neighborhood that are statistically homogeneous and connected to the center pixel~\cite{yao2024phase}, using $w=101$ for the real data and $w=31$ for the simulated data. Unlike Ferretti et al., we employ the Kuiper test~\cite[p.~739]{press2007numerical} to assess statistical homogeneity, with a significance level of $0.25$ ($0.05$ for the simulated data). We classify the center pixel as a DS pixel if the number of neighboring statistically homogeneous pixels (SHPs) is at least $50$. If this criterion is met, we estimate the coherence matrix for the resulting ensemble using \Cref{eq:coherencedef}. We then perform PL on the estimated coherence matrix for each method discussed in \Cref{sec:phaselinking}, and compute the corresponding quality coefficients from \Cref{sec:coefficients}.

For the ambiguity coefficient in \Cref{eq:ambcoef}, we set the dampening parameter to $\mu=0.333$. For ML methods, singular sample coherence matrices must be avoided, since these methods require inversion of $\BTau$. To reduce the likelihood of singularity, we apply ML-PL only when the sample size (i.e.,~the number of SHPs) is at least three times the stack size. In addition, we apply spectral regularization as in~\cite[Eq.~(3)]{zwieback2022cheap} with $\beta=10^{-4}$. When PL is performed, the output magnitudes are the acquisition-wise averages of the SHP magnitudes, while the phases are given by the PL solution, shifted such that the reference acquisition has zero phase, as in \Cref{eq:uniquehistory}. When PL is not performed, we retain the original data and set all quality coefficients to zero.

\subsection{Results}

\subsubsection{Simulated data}\label{sec:sim_results}

\Cref{fig:all_stats_temporal,fig:all_stats_seasonal} illustrate the dependence of the average normalized absolute error $\overline{\mrm{nAE}}$ and the average quality coefficients on the noise-mixing parameter~$\nu$. In addition, \Cref{tab:global_corr_error_both} reports global performance measures pooled over all valid pixels and all simulated noise levels, namely the correlation $\rho=\operatorname{corr}(1-\gamma,\mrm{nAE})$ and the average absolute calibration error $\bar e=\operatorname{avg}\big(\abs{1-\gamma-\mrm{nAE}}\big)$. These statistics are reported to evaluate how well each coefficient tracks the true error for the corresponding estimator (they are not meant as a benchmark comparison of PL methods).

\begin{table}[ht]
\centering
\begin{tabularx}{\textwidth}{@{} l *{12}{>{\small\centering\arraybackslash}X} @{}}
\toprule
Method
& \multicolumn{3}{c}{(E) Correlation $\rho$}
& \multicolumn{3}{c}{(E) Avg.\ abs.\ error $\bar e$}
& \multicolumn{3}{c}{(S) Correlation $\rho$}
& \multicolumn{3}{c}{(S) Avg.\ abs.\ error $\bar e$} \\
\cmidrule(lr){2-4}\cmidrule(lr){5-7}\cmidrule(lr){8-10}\cmidrule(lr){11-13}
& $\rho_{\mrm{GOF}}$ & $\rho_{\mrm{A}}$ & $\rho_{\mrm{CP(w)}}$
& $\bar e_{\mrm{GOF}}$ & $\bar e_{\mrm{A}}$ & $\bar e_{\mrm{CP(w)}}$
& $\rho_{\mrm{GOF}}$ & $\rho_{\mrm{A}}$ & $\rho_{\mrm{CP(w)}}$
& $\bar e_{\mrm{GOF}}$ & $\bar e_{\mrm{A}}$ & $\bar e_{\mrm{CP(w)}}$ \\
\midrule
EW-ED & 0.679 & 0.736 & 0.581 & 0.166 & 0.160 & 0.300
      & 0.798 & 0.894 & 0.674 & 0.140 & 0.123 & 0.262 \\
EW-PT & 0.672 & 0.748 & 0.570 & 0.173 & 0.174 & 0.302
      & 0.793 & 0.895 & 0.672 & 0.142 & 0.124 & 0.262 \\
\midrule
CW-ED & 0.811 & 0.610 & 0.770 & 0.102 & 0.370 & 0.122
      & 0.864 & 0.846 & 0.788 & 0.090 & 0.145 & 0.122 \\
CW-PT & 0.803 & 0.677 & 0.758 & 0.104 & 0.368 & 0.124
      & 0.848 & 0.874 & 0.785 & 0.098 & 0.119 & 0.123 \\
\midrule
ML-ED & 0.866 & -0.104 & 0.770 & 0.098 & 0.725 & 0.119
      & 0.905 & 0.097 & 0.801 & 0.078 & 0.613 & 0.120 \\
ML-PT & 0.834 & 0.103 & 0.769 & 0.156 & 0.678 & 0.119
      & 0.884 & 0.255 & 0.801 & 0.193 & 0.525 & 0.120 \\
\bottomrule
\end{tabularx}
\caption{Simulation results: global correlations and average absolute errors pooled over all valid pixels and all simulated noise levels for the exponential~(E) and seasonal~(S) decorrelation scenarios (\Cref{tab:sim_scenarios}). Here,
$\rho_{\bullet}=\operatorname{corr}(1-\gamma_{\bullet},\,\mrm{nAE})$ and
$\bar e_{\bullet}=\operatorname{avg}(\abs{1-\gamma_{\bullet}-\mrm{nAE}})$.
For EW methods, $\gamma_{\mrm{CP(w)}}=\gamma_{\mrm{CP}}$; otherwise $\gamma_{\mrm{CP(w)}}=\gamma_{\mrm{CPw}}$.}
\label{tab:global_corr_error_both}
\end{table}

\begin{figure}

    \centering
    \includegraphics[width=\linewidth]{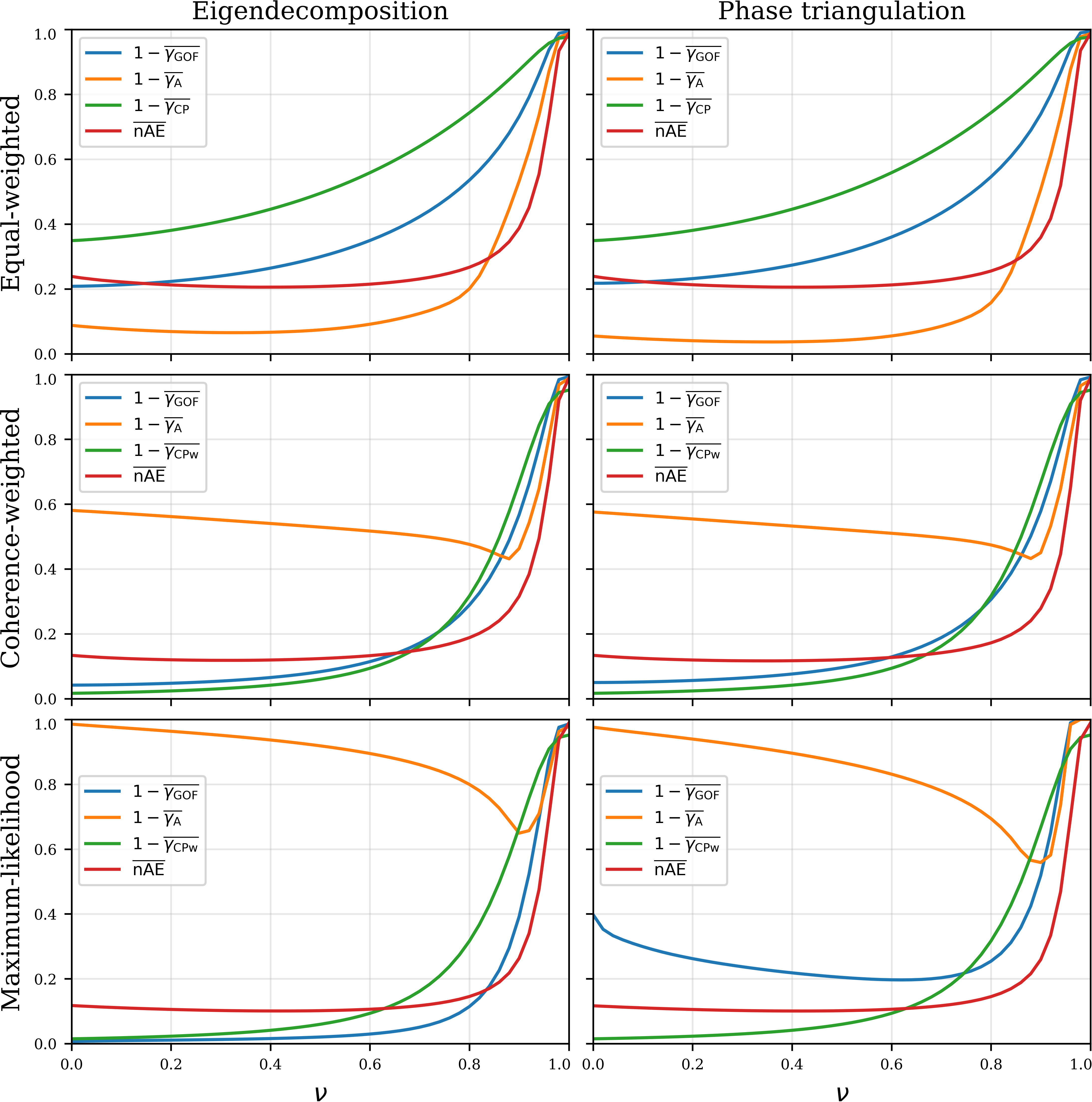}
    \caption{Exponential decorrelation simulation: average normalized absolute error and complementary average quality coefficients as a function of the noise-mixing parameter $\nu$. Each panel corresponds to one PL method class (rows: EW, CW, ML; columns: ED and PT). Curves show the averages $\overline{\mrm{nAE}}$, $1-\overline{\gamma_{\mrm{GOF}}}$, $1-\overline{\gamma_{\mrm{A}}}$, and $1-\overline{\gamma_{\mrm{CP}}}$ (EW) or $1-\overline{\gamma_{\mrm{CPw}}}$ (CW and ML), where the overline denotes the sample average over all simulated pixels.}
    \label{fig:all_stats_temporal}
    
\end{figure}

\begin{figure}

    \centering
    \includegraphics[width=\linewidth]{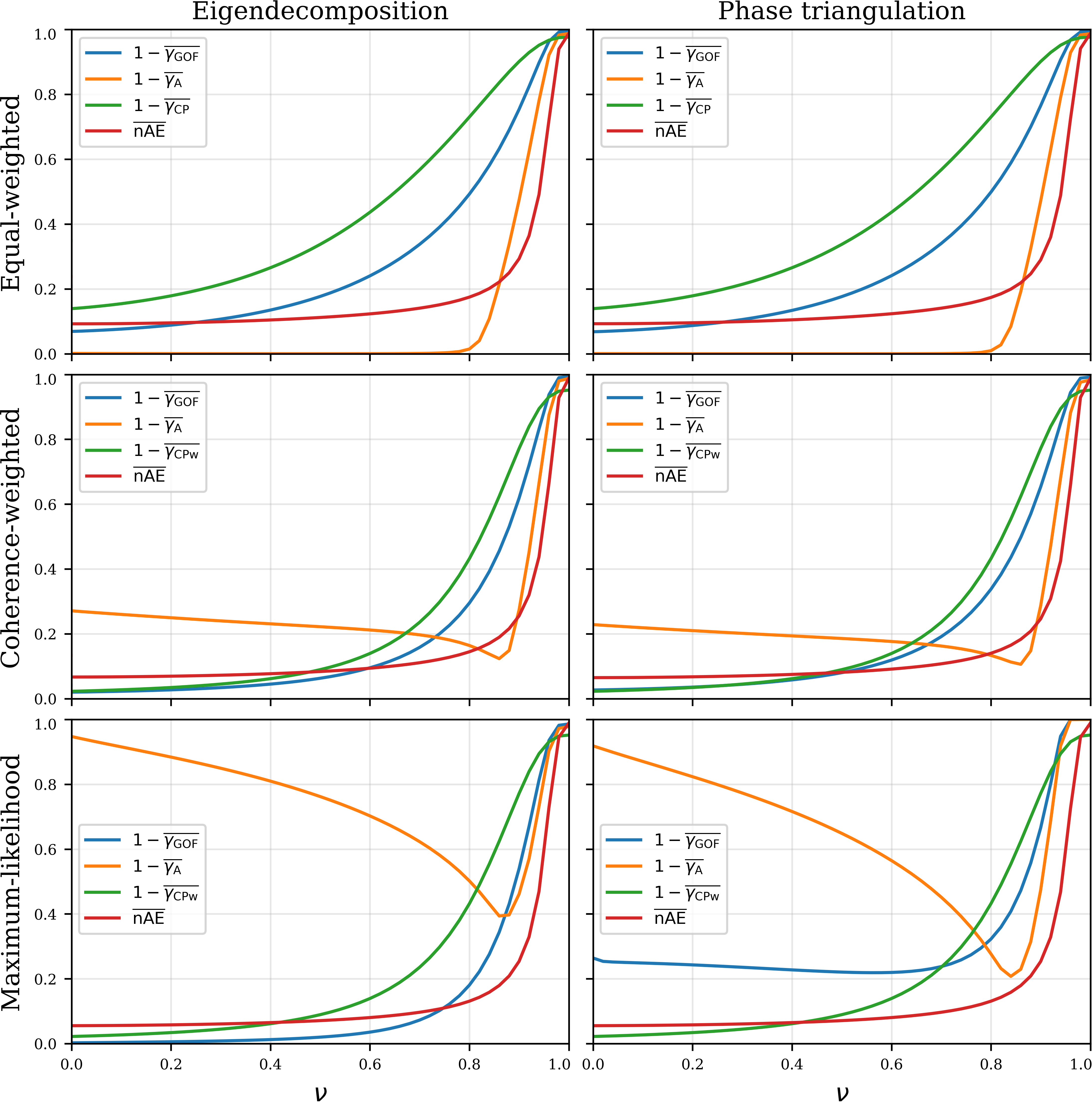}
    \caption{Seasonal decorrelation simulation: average normalized absolute error and complementary average quality coefficients as a function of the noise-mixing parameter $\nu$. Each panel corresponds to one PL method class (rows: EW, CW, ML; columns: ED and PT). Curves show the averages $\overline{\mrm{nAE}}$, $1-\overline{\gamma_{\mrm{GOF}}}$, $1-\overline{\gamma_{\mrm{A}}}$, and $1-\overline{\gamma_{\mrm{CP}}}$ (EW) or $1-\overline{\gamma_{\mrm{CPw}}}$ (CW and ML), where the overline denotes the sample average over all simulated pixels.}
    \label{fig:all_stats_seasonal}
    
\end{figure}

Across both scenarios, the goodness-of-fit coefficient shows the strongest and most consistent association with $\mrm{nAE}$. In the exponential decorrelation scenario, $\rho_{\mrm{GOF}}$ is about $0.67$-$0.68$ for EW methods, increases to about $0.80$-$0.81$ for CW methods, and reaches $0.87$ for ML-ED (\Cref{tab:global_corr_error_both}). The corresponding calibration errors $\bar e_{\mrm{GOF}}$ follow the same trend, with the smallest value attained by ML-ED ($\bar e_{\mrm{GOF}}=0.098$). In the seasonal decorrelation scenario, the same pattern holds, with $\rho_{\mrm{GOF}}$ up to $0.91$ and $\bar e_{\mrm{GOF}}$ down to $0.078$ . These trends are also reflected in \Cref{fig:all_stats_temporal,fig:all_stats_seasonal}, where $1-\overline{\gamma_{\mrm{GOF}}}$ closely tracks $\overline{\mrm{nAE}}$ across~$\nu$.

The closure phase coefficient provides a second, comparatively stable predictor. For CW and ML configurations, $\rho_{\mrm{CPw}}$ remains high (about $0.76$-$0.80$ across both scenarios) and the corresponding calibration errors are small ($\bar e_{\mrm{CPw}}\approx 0.12$; \Cref{tab:global_corr_error_both}). In the EW case, by contrast, $\rho_{\mrm{CP}}$ is noticeably lower (about $0.57$-$0.67$) and $\bar e_{\mrm{CP}}$ is substantially larger (about $0.26$-$0.30$).

The ambiguity coefficient behaves differently. For EW and CW methods, it correlates positively with $\mrm{nAE}$ in both scenarios, with particularly strong correlations in the seasonal case (\Cref{tab:global_corr_error_both}). For ML methods, the behavior is mixed: $\rho_{\mrm{A}}$ is negative for ML-ED in the exponential scenario and only weakly positive otherwise. This is consistent with the fact that $\gamma_{\mrm{A}}$ is constructed to satisfy the worst-case behavior of \Cref{eq:worstcase} but does not guarantee the best-case behavior of \Cref{eq:bestcase}. In \Cref{fig:all_stats_temporal,fig:all_stats_seasonal}, this remains visible through the characteristic breakpoint in $1-\overline{\gamma_{\mrm{A}}}$ induced by the dampening parameter~$\mu$ in \Cref{eq:ambcoef}.

Finally, the per-$\nu$ averages confirm that all methods approach a common noise-dominated regime as $\nu\to 1$, with $\overline{\mrm{nAE}}\approx 1$ and average quality coefficients close to zero (cf.~\Cref{eq:worstcase}). At $\nu=0$, the seasonal scenario yields smaller average errors than the exponential scenario (e.g.,~for EW-ED, $\overline{\mrm{nAE}}$ decreases from $0.238$ to $0.092$), with correspondingly larger average quality coefficients, consistent with the higher base coherence simulated in the seasonal scenario (\Cref{tab:sim_scenarios}).

\subsubsection{TerraSAR-X stack}\label{sec:real_results}

We first summarize basic coverage statistics for DS processing. With an SHP threshold of 50 and a significance level of 0.25 for the Kuiper test, about $57.5\%$ of the pixels are classified as DS pixels and enter coherence matrix estimation and PL. The phase-linked differential interferograms for the first acquisition relative to the reference acquisition are shown in \Cref{fig:ALL_INTF}. In the urban part of the scene, the single-look interferogram (\Cref{fig:visp_rmli_mapds_infts}) already exhibits a coherent phase pattern, and all methods yield very consistent phase-linked estimates. In strongly decorrelated regions, most notably the large agricultural field near the scene center (location A2 in \Cref{fig:visp_rmli_mapds_infts}) and surrounding vegetated areas, differences between the PL results become pronounced. By contrast, the well-maintained stadium field (location A1) appears comparatively consistent across methods, showing only minor visible differences in the reconstructed phase.

\begin{figure}

    \centering
    \includegraphics[width=0.925\linewidth]{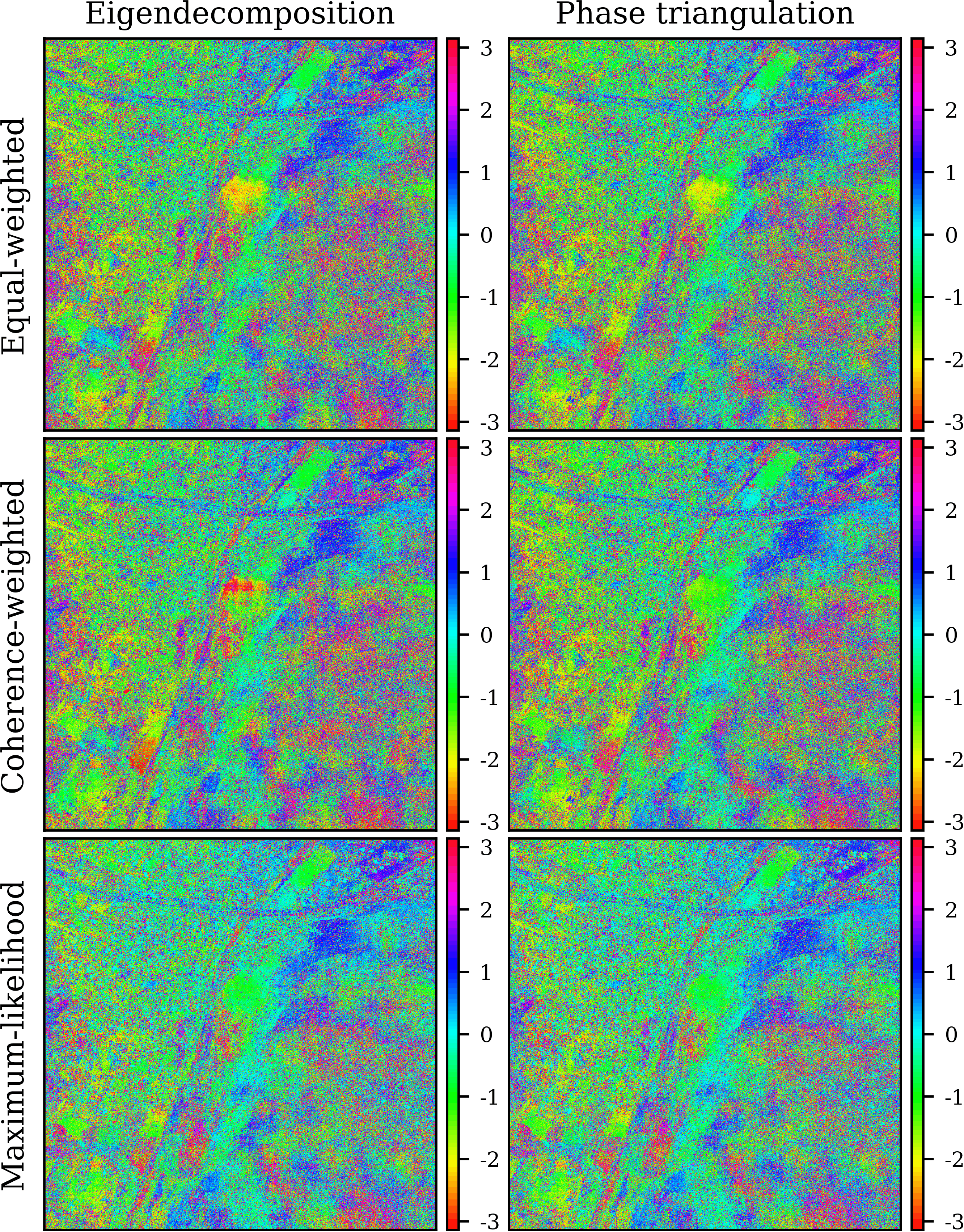}
\caption{Phase-linked differential interferometric phase for 30 Jun 2017 relative to the reference acquisition on 11 Sep 2019. The bottom-left panel of \Cref{fig:visp_rmli_mapds_infts} shows the corresponding unfiltered single-look interferogram. Panels are organized by weighting scheme (rows: EW, CW, ML) and solver class (columns: ED, PT). For each method, the interferometric phase of each pixel is replaced by the PL estimate whenever the pixel is classified as DS according to the number of SHPs; otherwise, the original interferometric phase is retained.}
    \label{fig:ALL_INTF}
    
\end{figure}

\Cref{fig:phase_closure_with_hist} shows the closure phase coefficients. The EW coefficient $\gamma_{\mrm{CP}}$ yields generally low values over large parts of the scene (median $\approx 0.11$ on DS pixels), whereas the CW version $\gamma_{\mrm{CPw}}$ produces higher values in areas with strong coherence (median $\approx 0.32$). Spatially, both coefficients assign high values to stable surfaces and low values to strongly decorrelated regions. In particular, regions such as the river surface and densely vegetated areas exhibit very low closure phase coefficients, while the urban area and some agricultural parcels attain substantially higher values.

\begin{figure}
    \centering
    \includegraphics[width=0.9\linewidth]{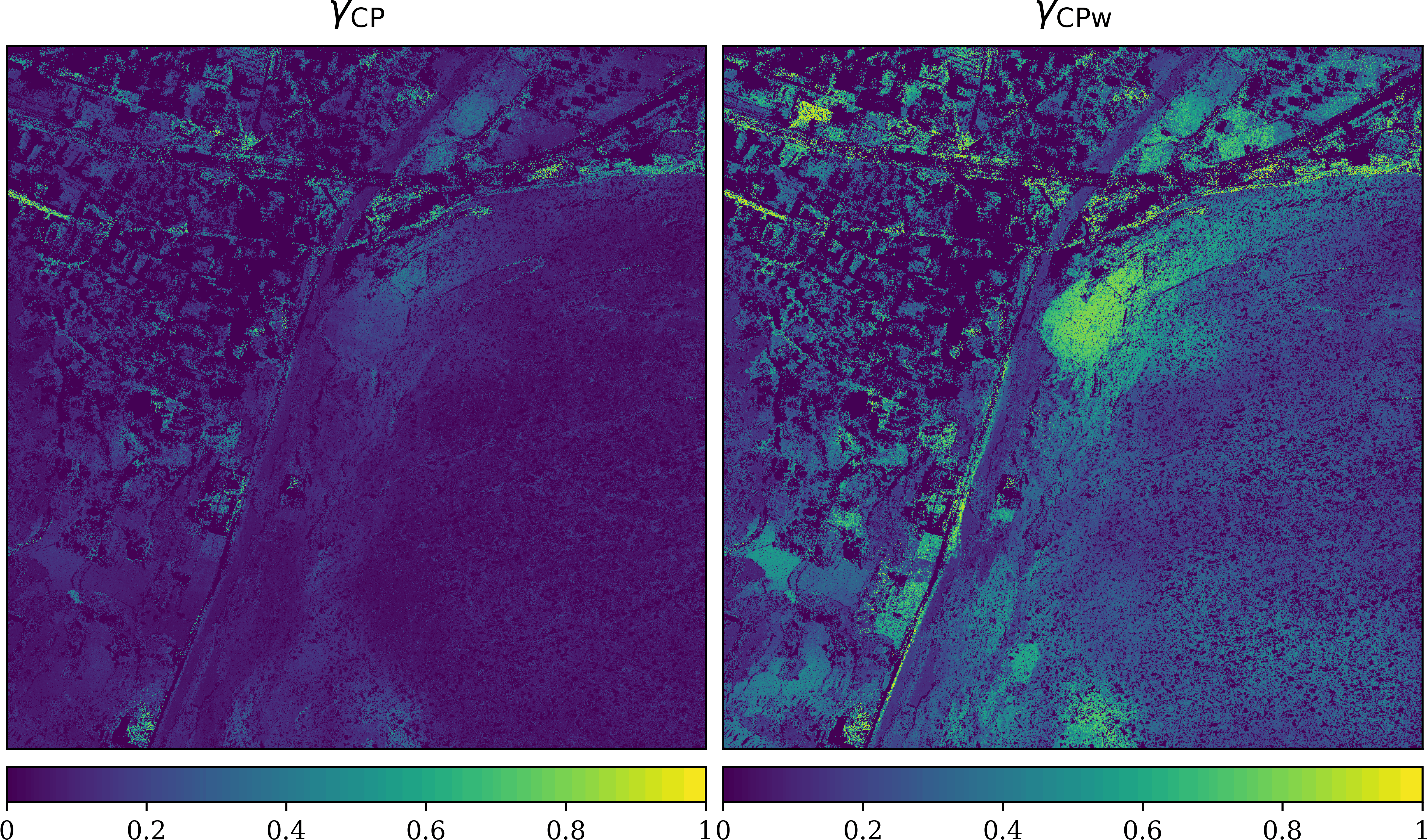}
    \caption{Spatial distribution of the EW (left) and CW (right) closure phase coefficients for the TerraSAR-X stack over Visp, Switzerland. Values are shown in $[0,1]$ with colorbar, where higher values indicate smaller (weighted) closure phases and thus better consistency with the assumption of a phase history.}
    \label{fig:phase_closure_with_hist}
\end{figure}

The spatial distribution of the goodness-of-fit coefficient is shown in \Cref{fig:all_gof}. High $\gamma_{\mrm{GOF}}$ values concentrate in the urban area and on other stable surfaces, whereas vegetated areas exhibit substantially lower values. CW methods yield higher goodness-of-fit values than EW methods over most of the scene. Among the ML methods, ML-ED produces the highest overall goodness-of-fit values on pixels where it is applicable (median $\gamma_{\mrm{GOF}}\approx0.51$ on DS pixels), while ML-PT yields substantially lower values (median $\gamma_{\mrm{GOF}}\approx0.25$).

\Cref{fig:all_amb} shows the ambiguity coefficient for the Visp scene. It reveals a clear contrast between two prominent DS regions: the stadium field (A1) and the agricultural parcels (A2) marked in \Cref{fig:visp_rmli_mapds_infts}. A1 is regularly maintained, whereas A2 undergoes substantial intra- and inter-annual changes (crop growth, harvesting, and varying parcel management). Correspondingly, A2 exhibits low $\gamma_{\mrm{A}}$ values and coincides with the region where the PL estimates in \Cref{fig:ALL_INTF} differ most across configurations, while A1 shows comparatively high $\gamma_{\mrm{A}}$ values despite only moderate goodness-of-fit values in some cases. \Cref{fig:comparison_a} provides a close-up: for CW-PT, $\gamma_{\mrm{CPw}}$ and $\gamma_{\mrm{GOF}}$ are similar over A1 and A2 (with $\gamma_{\mrm{CPw}}$ slightly higher in A2), whereas $\gamma_{\mrm{A}}$ is markedly higher over A1, indicating complementary reliability information beyond fit and closure phase consistency. Similar behavior is observed across the other PL configurations. A second example is shown in \Cref{fig:comparison_b}, which relates the coefficients to visible urban change at three parking lots (B1--B3). The orthophotos show that B1 is already present throughout the SAR acquisition period, while B2 is constructed between summer 2018 and summer 2019 and B3 between summer 2020 and summer 2021. At B2 and B3, the pre-construction orthophotos show grass, so the pre-construction acquisitions are expected to have lower coherence than the post-construction acquisitions.  For CW-PT, $\gamma_{\mathrm{CPw}}$ and $\gamma_{\mathrm{GOF}}$ again attain broadly similar magnitudes across the three locations (with $\gamma_{\mathrm{CPw}}$ highest at B3), whereas $\gamma_{\mathrm{A}}$ is much higher at B1 and B2 than at B3. This is consistent with interpreting $\gamma_{\mathrm{A}}$ as an indicator of ambiguity: after the late construction at B3, the high-coherence acquisitions can dominate the CW-PT objective, leaving the earlier (pre-construction) phases comparatively weakly constrained and allowing competing explanations. Qualitatively similar behavior is observed for the other PL configurations; an exception is CW-ED, for which $\gamma_{\mathrm{A}}$ may remain high when $\BTau$ is effectively dominated by a single coherent block. Lastly, unlike the other coefficients, $\gamma_{\mathrm{A}}$ is consistently low over the layover area in the bottom center, where we expect unreliable estimates.\pagebreak

\begin{figure}

    \centering
    \includegraphics[width=0.975\linewidth]{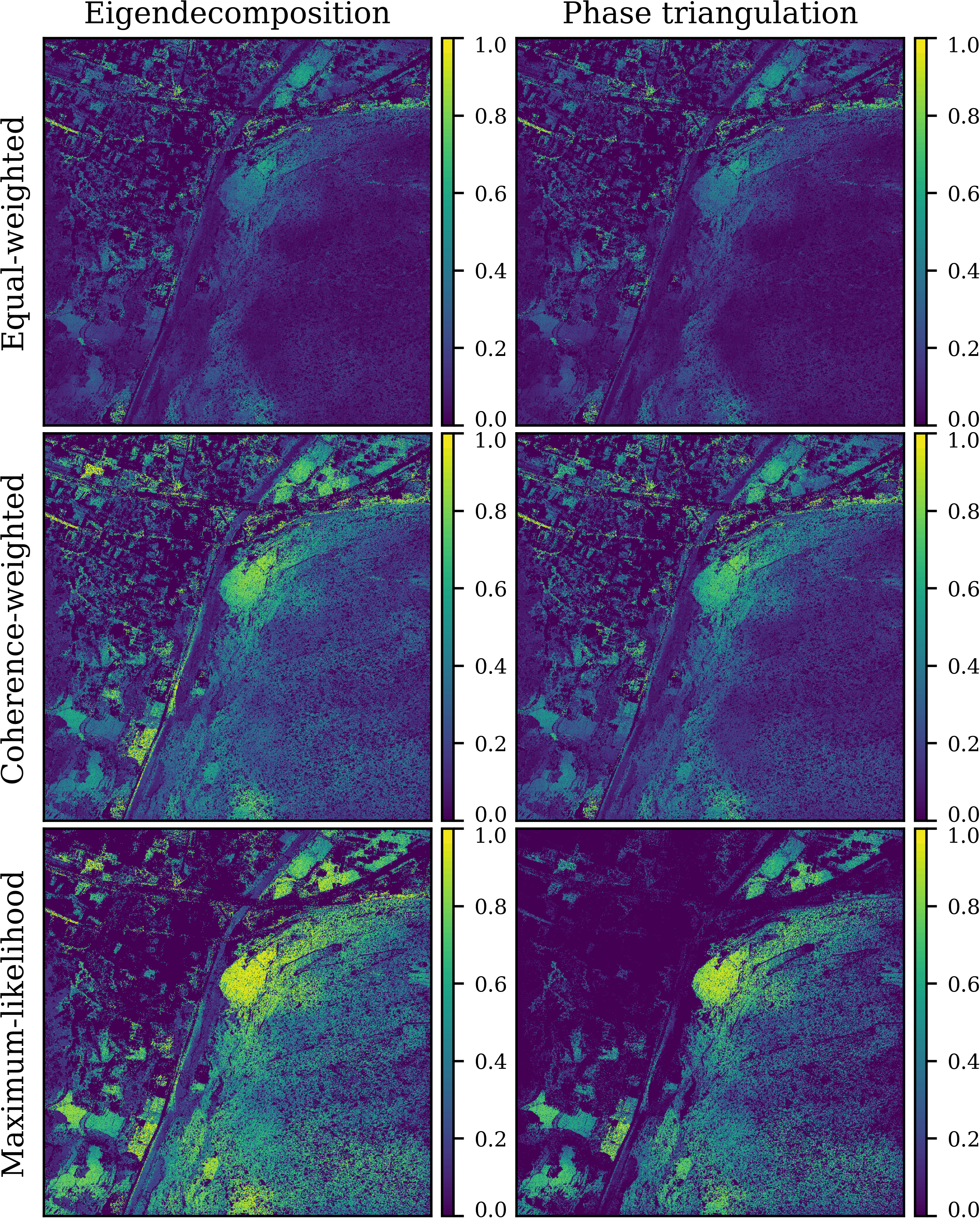}
\caption{Spatial distribution of the goodness-of-fit coefficient $\gamma_{\mrm{GOF}}$ for the TerraSAR-X stack over Visp, Switzerland, shown for six PL configurations (rows: EW, CW, ML; columns: ED and PT). Each panel displays the per-pixel values of $\gamma_{\mrm{GOF}}\in[0,1]$, where higher values indicate better agreement between the estimated rank-one phase model and the observed sample coherence. Color bars are shown for each panel.}
    \label{fig:all_gof}
    
\end{figure}

\begin{figure}

    \centering
    \includegraphics[width=0.975\linewidth]{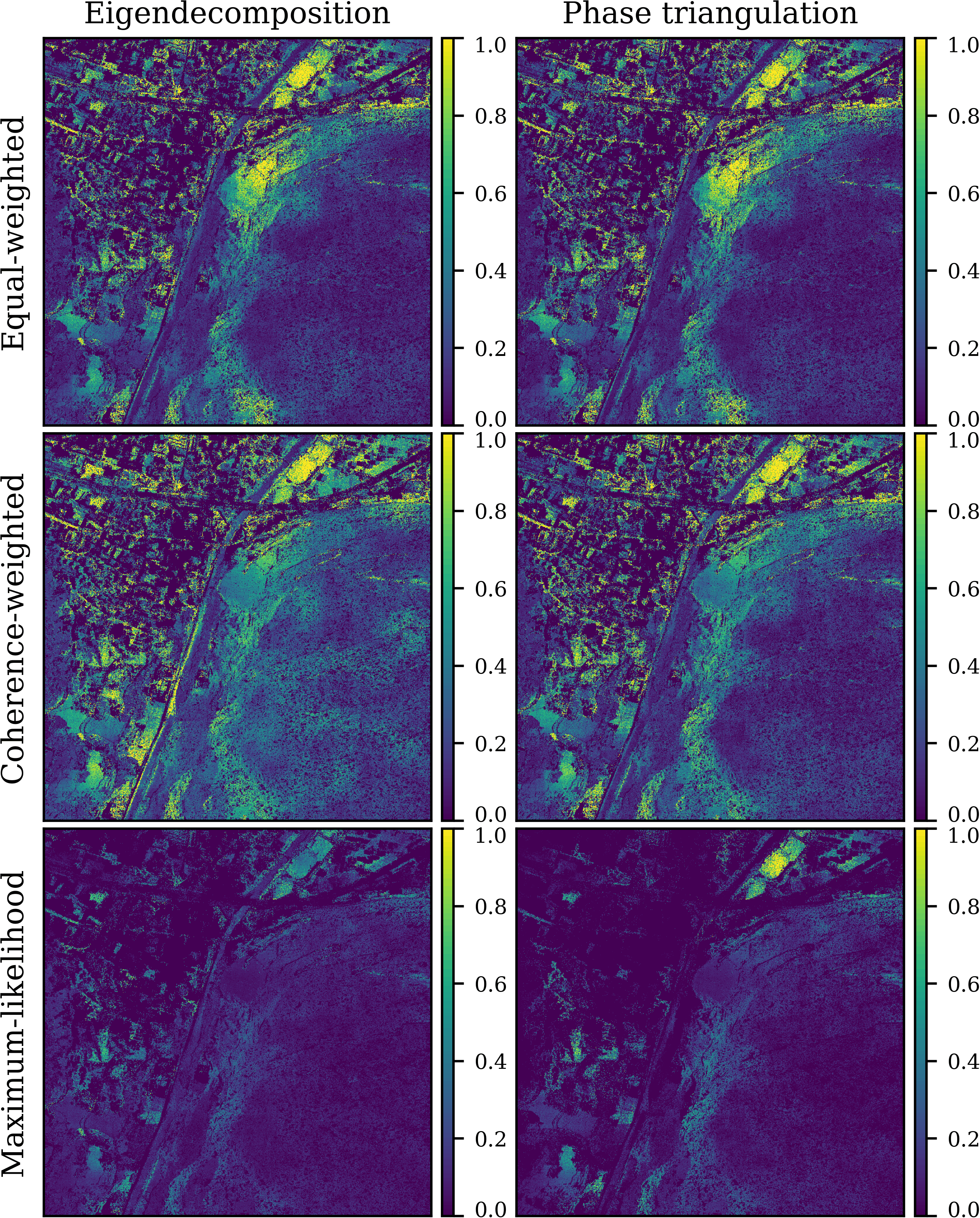}
    \caption{Spatial distribution of the ambiguity coefficient $\gamma_{\mrm{A}}$ for the TerraSAR-X stack over Visp, Switzerland, shown for six PL configurations (rows: EW, CW, ML; columns: ED and PT). Each panel displays the per-pixel values of $\gamma_{\mrm{A}}\in[0,1]$, where higher values indicate a larger separation in terms of the achieved objective value between the primary solution and the best alternative in the orthogonal complement. Color bars are shown for each panel.}
    \label{fig:all_amb}
    
\end{figure}

\begin{figure}
    \centering
    \includegraphics[width=\linewidth]{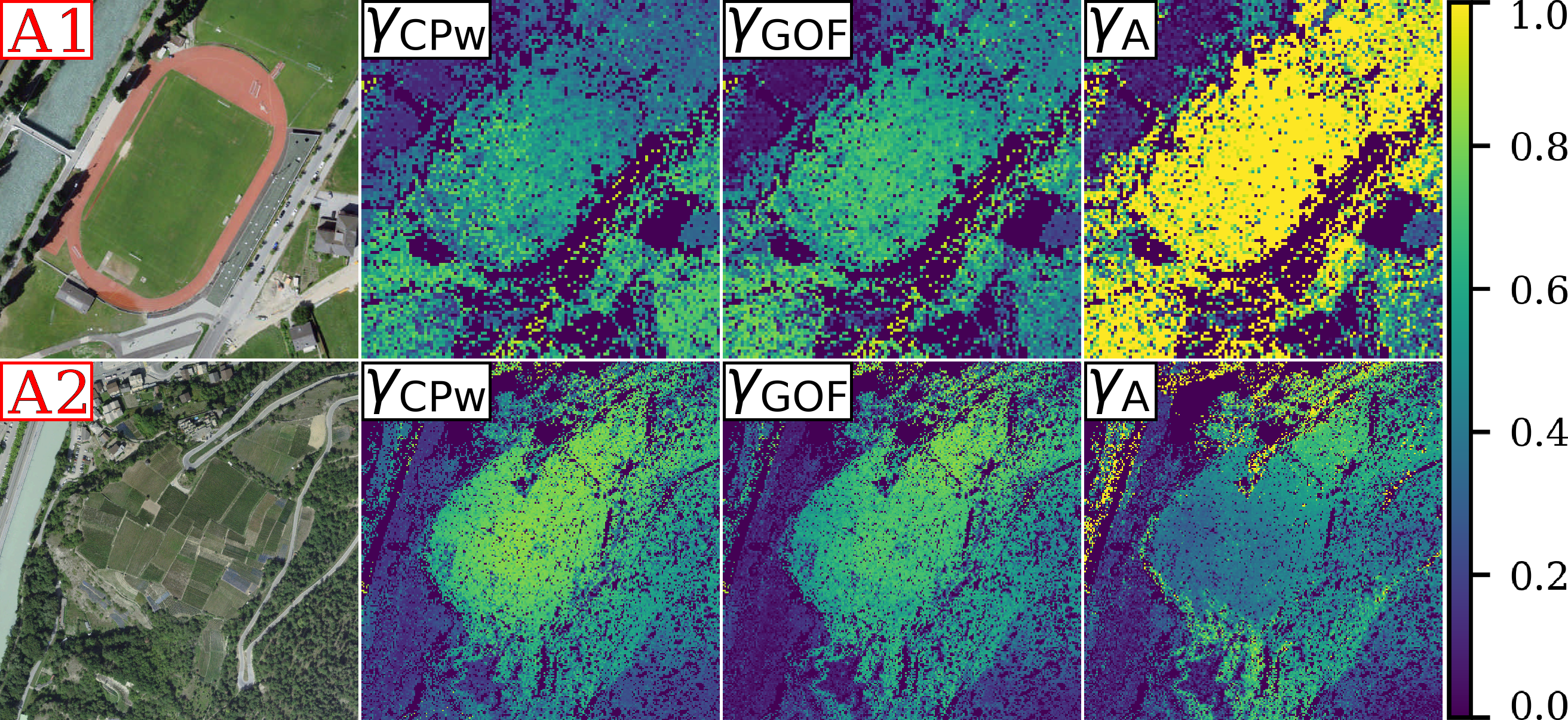}
    \caption{Comparison of CW-PT quality coefficients on two distributed scatterer regions in the TerraSAR-X Visp stack (A1: stadium field; A2: agricultural parcels; locations marked in  \Cref{fig:visp_rmli_mapds_infts}). Columns show (left to right): Swissimage orthophoto (\textcopyright~swisstopo: \url{https://www.swisstopo.admin.ch/en/orthoimage-swissimage-10}), coherence-weighted closure phase coefficient $\gamma_{\mathrm{CPw}}$, goodness-of-fit coefficient $\gamma_{\mathrm{GOF}}$, and ambiguity coefficient $\gamma_{\mathrm{A}}$ for coherence-weighted phase triangulation (CW-PT). Rows correspond to A1 (top) and A2 (bottom). All coefficient panels share a common colorbar and are shown on the scale $[0,1]$.}
    \label{fig:comparison_a}
\end{figure}

\begin{figure}
    \centering
    \includegraphics[width=\linewidth]{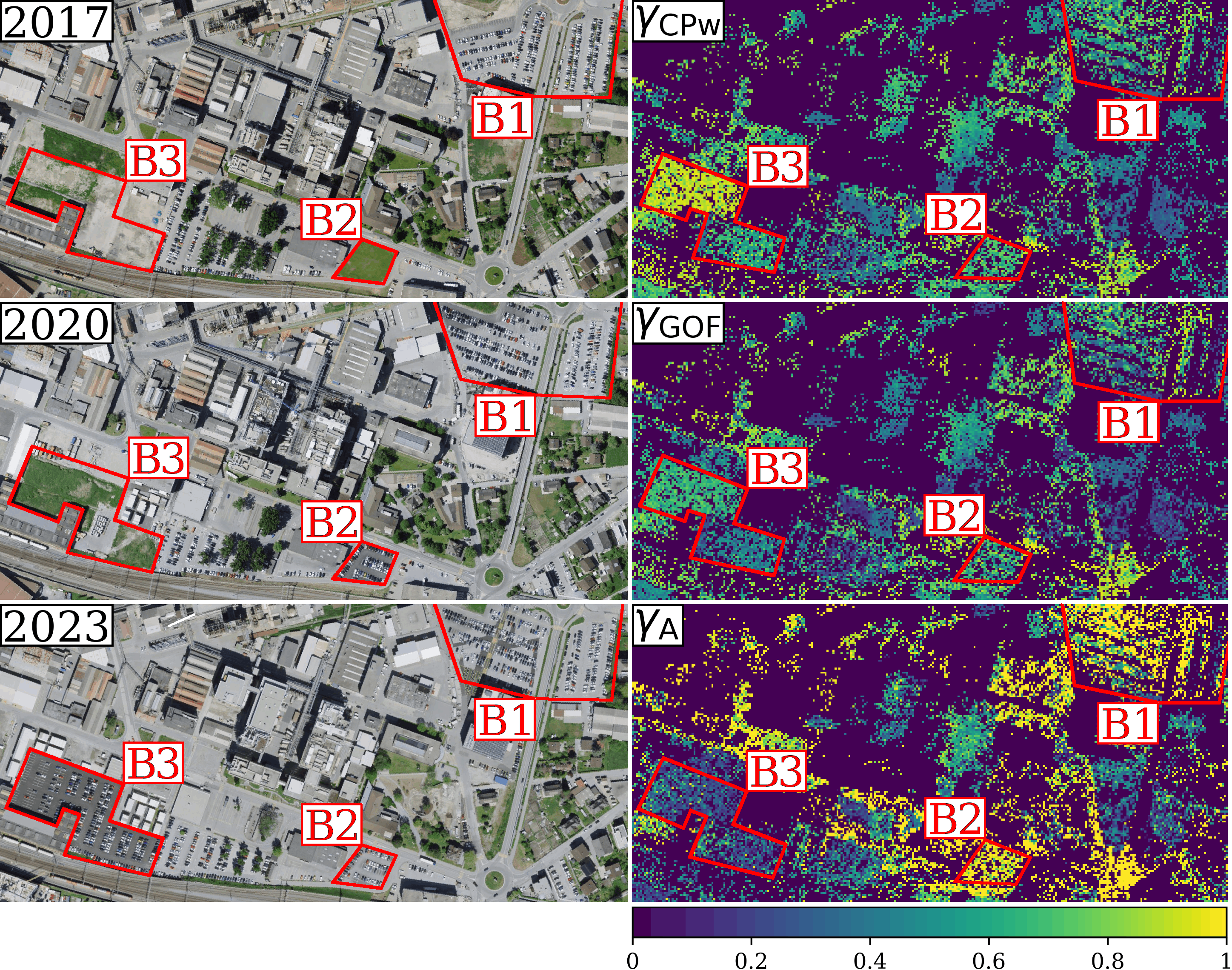}
    \caption{Urban close-up illustrating how the ambiguity coefficient reflects scene changes in the TerraSAR-X Visp dataset (locations B1--B3 marked; see  \Cref{fig:visp_rmli_mapds_infts}). The left column shows Swissimage orthophotos for three years (top to bottom: 2017, 2020, 2023; \textcopyright~swisstopo: \url{https://www.swisstopo.admin.ch/en/orthoimage-swissimage-10}) as visual context for land-cover change. The orthophotos do not coincide with the SAR acquisition dates; the TerraSAR-X scenes were acquired in the early morning hours (around 05:30 local time). The right column shows quality-coefficient maps for coherence-weighted phase triangulation (CW-PT) computed from the TerraSAR-X stack (top to bottom: coherence-weighted closure phase coefficient $\gamma_{\mathrm{CPw}}$, goodness-of-fit coefficient $\gamma_{\mathrm{GOF}}$, and ambiguity coefficient $\gamma_{\mathrm{A}}$). All coefficient panels share a common colorbar and are shown on the scale $[0,1]$. The marked DS locations correspond to parking lots with different construction times: B1 is already present at the beginning of the acquisition period (summers 2017--2022), B2 is constructed between summer 2018 and summer 2019, and B3 between summer 2020 and summer 2021 (grass before construction in both cases). Accordingly, coherence is expected to be higher in post-construction epochs than in pre-construction epochs for B2 and B3. While $\gamma_{\mathrm{CPw}}$ and $\gamma_{\mathrm{GOF}}$ are of comparable magnitude at these locations (with $\gamma_{\mathrm{CPw}}$ highest at B3), $\gamma_{\mathrm{A}}$ is markedly lower at B3 than at B1/B2. This is consistent with late structural change: the strongly coherent post-construction acquisitions are emphasized by CW-PT, which can leave pre-construction phases comparatively weakly constrained and thus increase solution ambiguity. Similar patterns are observed for the other PL configurations; CW-ED can be an exception when $\BTau$ is effectively dominated by a single coherent block, which may still yield a dominant eigenvalue.}
\label{fig:comparison_b}
\end{figure}

\section{Discussion}

The main contributions of this work are three complementary scalar quality coefficients for assessing the reliability of PL estimates, formulated in a unified framework for a range of PL methods. The experiments indicate that these coefficients capture distinct, practically relevant failure modes: lack of model fit (goodness-of-fit), internal inconsistency of the observed interferometric phases (closure phase), and weak distinction of the recovered solution (ambiguity). The experiments are designed to evaluate whether the proposed coefficients behave as intended under controlled decorrelation models and to illustrate their spatial behavior on one real DInSAR stack.

A first observation from the simulations is that $\gamma_{\mrm{GOF}}$ is the most consistently informative coefficient with respect to the normalized absolute error $\mrm{nAE}$ across both decorrelation scenarios and all method classes. The strong correlations and comparatively small calibration errors suggest that $\gamma_{\mrm{GOF}}$ is a useful default indicator when a single coefficient is desired. At the same time, the real-data maps show that high $\gamma_{\mrm{GOF}}$ values concentrate where the PL model is plausible (urban structures and other stable areas), while strongly decorrelated surfaces (dense vegetation, river) are consistently assigned low values.

The closure phase coefficient conveys information similar to that of $\gamma_{\mrm{GOF}}$, as reflected by a strong correlation. For example, in the real-data experiment we have $\operatorname{corr}(\gamma_{\mrm{GOF}},\gamma_{\mrm{CPw}})=0.897$ for the CW-PT method. However, the close-ups in \Cref{fig:comparison_a,fig:comparison_b} also reveal clear differences. Importantly, the closure phase coefficient is computed directly from $\BTau$ and can therefore be used a priori, before running any solver. The CW version is particularly well aligned with CW and ML objectives: it down-weights triplets that are already expected to be unreliable, and empirically it shows a stronger association with $\mrm{nAE}$ than the EW version outside the EW method class. In practice, this makes $\gamma_{\mrm{CP(w)}}$ attractive as a fast pre-screening criterion, especially when large stacks make PL computationally expensive. In applications where line-of-sight displacement is to be estimated, the closure phase coefficient can be used with a comparatively strict threshold. By contrast, if parameters that may drive nonzero closure phases (e.g.,~soil moisture) are also of interest and should be estimated jointly, then filtering based on closure phase may be inappropriate and should be omitted.

The ambiguity coefficient $\gamma_{\mrm{A}}$ targets a different aspect: it measures how clearly the primary solution is distinguished, in terms of the achieved objective value, from competing alternatives in the solution space. This is most useful in cases where the goodness-of-fit coefficient alone can be misleading, e.g.,~when multiple competing solutions explain the data almost equally well. In the simulations, $\gamma_{\mrm{A}}$ correlates well with $\mrm{nAE}$ for EW and CW methods in both decorrelation scenarios (\Cref{tab:global_corr_error_both}).
For ML configurations, the relationship is weaker and less stable; for example, $\rho_{\mrm{A}}$ is negative for ML-ED in the exponential case, while it is only weakly positive for ML methods in the seasonal case. This is consistent with the fact that $\gamma_{\mrm{A}}$ is constructed to satisfy \Cref{eq:worstcase} but does not guarantee \Cref{eq:bestcase}. In other words, $\gamma_{\mrm{A}}$ approaches zero as $\BSigma \to \Bmrm{I}_N$ (pure noise), but it is not necessarily one in the absence of coherence matrix estimation errors. Nevertheless, for EW and CW methods, the coefficient still captures useful information. The role of the dampening parameter $\mu$ is visible as a characteristic breakpoint (change in slope) in $1-\overline{\gamma_{\mrm{A}}}$. It marks the transition between the regime where $\mathcal{F}^{(1)}-\mathcal{F}_{\mrm{noise}}$ is small and the denominator is clamped, and the regime where the denominator follows $\mathcal{F}^{(1)}-\mathcal{F}_{\mrm{noise}}$. Increasing $\mu$ shifts this transition towards smaller $\nu$ values (higher SNR), consistent with the intended stabilization at high noise. In the real-data experiments, $\gamma_{\mrm{A}}$ attains low values in areas where the PL outputs differ most visibly across methods despite high goodness-of-fit coefficients. Conversely, it also highlights areas that appear very stable despite only moderate $\gamma_{\mrm{GOF}}$ values. Moreover, the results suggest that $\gamma_{\mrm{A}}$ is sensitive to both land-cover changes and layover, indicating that it captures information beyond pure model fit.

From an operational perspective, the coefficients can be combined in a simple decision logic. $\gamma_{\mrm{CP(w)}}$ can serve as an inexpensive pre-screening criterion, applied with a lenient threshold (i.e.,~a low cutoff) to remove only clearly unreliable pixels before running PL. Among the remaining DS pixels, $\gamma_{\mrm{GOF}}$ can serve as the primary selection criterion, since it most consistently reflects reliability in our simulations, while $\gamma_{\mrm{A}}$ provides an additional safeguard by flagging solutions that are weakly distinguished (and therefore potentially unstable) even when $\gamma_{\mrm{GOF}}$ is high. Note that computing $\gamma_{\mrm{A}}$ for PT is more costly than for ED. In time-critical settings, one may therefore complement the goodness-of-fit coefficient of a PT method with the ambiguity coefficient computed from the corresponding ED method, especially when the ED estimate is used to initialize the PT solver. This is supported by the results that show a high similarity in $\gamma_{\mrm{A}}$ between PT methods and their corresponding ED relaxations. In practice, thresholds for all coefficients are application-dependent and should be calibrated to balance coverage against reliability: stricter thresholds reduce false inclusions but may exclude usable pixels, whereas looser thresholds increase coverage at the risk of retaining unreliable estimates.

Several limitations and open directions remain. First, the presented framework focuses on estimators that can be expressed in the form of \Cref{eq:defPLmethods}. While this covers most common PL approaches, methods such as SBAS are only related indirectly; extending the quality-coefficient framework to explicitly accommodate such estimators would be valuable. Second, the goodness-of-fit coefficient depends on a choice of objective representative for each method. We evaluate each estimator using the objective it is designed to optimize (cf.~\Cref{eq:defPTreal,eq:defEDmethods}), but this choice is not unique: any strictly monotone transformation of the objective yields the same set of maximizers and could therefore also serve as the basis for a quality coefficient. Different representatives can nevertheless lead to different calibrations once the achieved value is normalized between upper and lower reference levels. Identifying objective transformations that yield better calibrated, more monotone, or more method-robust reliability indicators, while preserving the maximizers, appears to be a promising direction for future work. Third, the real-data evaluation is limited to a single TerraSAR-X stack over Visp, Switzerland; validating transferability across sensors and decorrelation regimes (e.g.,~Sentinel-1 C-band, ALOS-2 L-band) and across more diverse land-cover types (ideally incorporating land-cover classification) is an important direction for future work. Finally, the proposed coefficients are reliability indicators conditional on a chosen estimator; using them to support systematic method selection is an interesting direction that we do not pursue here.

\section{Conclusion}

We presented a unified mathematical framework for interferometric phase linking and introduced three quality coefficients: a goodness-of-fit coefficient $\gamma_{\mrm{GOF}}$, a closure phase coefficient $\gamma_{\mrm{CP(w)}}$, and an ambiguity coefficient $\gamma_{\mrm{A}}$. Simulations under exponential and seasonal decorrelation show that the goodness-of-fit coefficient tracks the normalized absolute phase error most consistently, while the closure phase coefficient provides a robust complementary indicator that can be applied prior to phase linking. The ambiguity coefficient captures a different failure mode by highlighting regions where competing solutions exist or where solver outcomes are likely unstable; in the simulations it is most informative for EW and CW configurations, while its behavior under ML objectives is weaker and less stable. On a TerraSAR-X stack over Visp, Switzerland, the spatial patterns of the coefficients align with expected scattering behavior: stable man-made structures yield high values, while strongly decorrelated surfaces yield low values. Importantly, the ambiguity coefficient attains low values in areas where PL estimates diverge most strongly across methods and captures localized land-cover changes. Overall, the proposed coefficients provide practical tools for DS pixel selection and for diagnosing PL reliability, and they are compatible with a broad range of commonly used PL methods within the unified framework.

\section*{Author Contributions}

Conceptualization, M.H. and O.F.; methodology, M.H. and O.F.; software, M.H.; validation, M.H. and O.F.; formal analysis, M.H.; investigation, M.H.; resources, I.H. and O.F.; data curation, M.H.; writing---original draft preparation, M.H.; writing---review and editing, M.H. and O.F.; visualization, M.H.; supervision, O.F. and I.H.; project administration, O.F.; funding acquisition, O.F. All authors have read and agreed to the published version of the manuscript.

\section*{Funding sources}

This work was funded by Swiss National Science Foundation (SNSF) grant number 216526 (\url{https://data.snf.ch/grants/grant/216526}). 

\section*{Data availability}

The TerraSAR-X data of Visp used in this study were provided
by the German Aerospace Center (DLR) via an announcement of opportunities (MTH3910). \Cref{tab:acq_dates} lists the acquisition dates that were used for the experiments.

\section*{Declaration of competing interest}

The authors declare that they have no known competing finan-
cial interests or personal relationships that could have appeared to
influence the work reported in this paper.

\section*{Acknowledgments}

We would like to thank Stefano Tebaldini, Dirk Lebiedz and Moritz Feuerle for fruitful discussions.

\section*{Abbreviations and Mathematical Notation}
The following abbreviations (in alphabetical order) and mathematical notation (in order of appeareance) are used in this manuscript:

\noindent
\begin{tabbing}
\hspace{2.5cm}\= \kill
AR(1) \> AutoRegressive model of order 1\\
BFGS \> Broyden--Fletcher--Goldfarb--Shanno\\
CAESAR \> Component extrAction and sElection Synthetic Aperture Radar\\
CNES \> Centre National d'Ètudes Spatiales\\
COFI-PL \> COvariance-FItting Phase Linking\\
CW \> Coherence-Weighted\\
DEM \> Digital Elevation Model\\
DLR \> Deutsches Zentrum f\"ur Luft- und Raumfahrt\\
DS \> Distributed Scatterer\\
ED \> Eigendecomposition\\
EMI \> Eigendecomposition-based Maximum-likelihood estimator of Interferometric phase\\
ETH \> Swiss Federal Institute of Technology Zurich (ETH Zurich)\\
EW \> Equal-Weighted\\
InSAR \> Interferometric Synthetic Aperture Radar\\
LAPACK \> Linear Algebra PACKage\\
LaMIE \> Large-dimensional Multipass InSAR phase Estimation for Distributed Scatterers\\
ML \> Maximum-Likelihood\\
MM \> Majorize-Minimization\\
NPOC \> National Point of Contact\\
PRCG \> Polak-Ribière nonlinear Conjugate Gradient\\
PL \> Phase Linking\\
PS \> Persistent Scatterer\\
PT \> Phase Triangulation\\
PTCM \> Phase Triangulation Coherence Maximization\\
SAR \> Synthetic Aperture Radar\\
SBAS \> Small BAseline Subset\\
SHP \> Statistically Homogeneous Pixel\\
SLC \> Single-Look Complex\\
SNSF \> Swiss National Science Foundation\\
\>\\
$N$ \> \Cref{sec:cohmat}, number of acquisitions (SLC images)\\
$\Omega$ \> \Cref{sec:cohmat}, set of pixel-wise $N$-dimensional time series belonging to a particular DS\\
$\boldsymbol{\omega}$ \> \Cref{eq:ccg}, multitemporal SLC vector, element of $\Omega$\\
$\mathcal{CN}$ \> \Cref{eq:ccg}, complex normal distribution\\
$\Cov,\ \Cove_{ij}$ \> \Cref{eq:ccg}, covariance matrix and its $(i,j)$-entry\\
$\BSigma,\ \Sigma_{ij}$ \> \Cref{eq:coherencedef}, coherence matrix and its $(i,j)$-entry\\
$\BTau,\ \Tau_{ij}$ \> \Cref{eq:coherencedef}, sample coherence matrix and its $(i,j)$-entry\\
$\abs{\cdot}$ \> \Cref{sec:cohmat}, magnitude (modulus); for matrices/vectors, applied entrywise\\
$\circ$ \> \Cref{sec:cohmat}, Hadamard (entrywise) product\\
$\mathbb{T}^N$ \> \Cref{sec:cohmat}, $N$-torus of unit-modulus $\C^N$-vectors\\
$\Theta_\R,\ \Theta_\C$ \> \Cref{eq:deflinkage}, (real/complex) linkage of $\BSigma$\\
$\btheta,\ \bvtheta$ \> \Cref{eq:deflinkage}, (real/complex) phase history\\
$(\cdot)^\herm$ \> \Cref{eq:deflinkage}, Hermitian (conjugate) transpose\\
$\angle(\cdot)$ \> \Cref{sec:cohmat}, complex argument (phase) in $(-\pi,\pi]$; for matrices/vectors, applied entrywise\\
$\operatorname{span}(\cdot)$ \> \Cref{eq:conditionregular}, linear span of a set of vectors\\
$\dim(\cdot)$ \> \Cref{eq:conditionregular}, dimension of a vector space\\
$\phi_{ij}$ \> \Cref{eq:phasematdef}, interferometric phase (argument of $\Tau_{ij}$, with $\phi_{ij}=0$ if $\Tau_{ij}=0$)\\
$\BPhi,\ \Phi_{ij}$ \> \Cref{eq:phasematdef}, unit phasor matrix and its $(i,j)$-entry, $\Phi_{ij}=\exp(\iu\,\phi_{ij})$\\
$\abs{\Tau_{ij}}$ \> \Cref{eq:phasematdef}, sample coherence magnitude\\
$\phi^\Delta_{ijk}$ \> \Cref{eq:defclosurephase}, closure phase\\
$\hat{\Theta}_\R$ \> \Cref{eq:defPLmethods}, estimator of the real linkage (set of estimated real phase histories)\\
$\mathcal{X}$ \> \Cref{eq:defPLmethods}, feasible set for the opt. problem\\
$\mathbb{U}^N$ \> \Cref{eq:defPLmethods}, unit-norm vectors with nonzero entries, $\mathbb{U}^N=\cbrack*{\boldsymbol{v}\in\C^{N}\mid \norm{\boldsymbol{v}}_2=1\ \wedge\ \forall\,i:\,v_i\neq 0}$\\
$\norm{\cdot}_2$ \> \Cref{eq:defPLmethods}, Euclidean (vector) $2$-norm\\
$\inner*{\Bmrm{A}}{\Bmrm{B}}_\frob$ \> \Cref{eq:defPLmethods}, Frobenius inner product, $\inner*{\Bmrm{A}}{\Bmrm{B}}_\frob=\sum_{i,j}\overline{\mathrm{A}_{ij}}\,\mathrm{B}_{ij}$\\
$\Bmrm{W},\,\mathrm{W}_{ij}$ \> \Cref{sec:phaselinking}, symmetric weight matrix and its $(i,j)$-entry\\
$\lambda_{\max}(\Bmrm{A})$ \> \Cref{eq:defEDmethods}, largest eigenvalue of $\Bmrm{A}$\\
$E_{\lambda_{\max}}(\Bmrm{A})$ \> \Cref{eq:defEDmethods}, eigenspace associated with $\lambda_{\max}(\Bmrm{A})$\\
$\mathds{1}_N$ \> \Cref{sec:sectionweights}, all-ones matrix (identity element for the Hadamard product)\\
$\Bmrm{W}^{\mathrm{ML}}$ \> \Cref{sec:sectionweights}, ML weights, $\Bmrm{W}^{\mathrm{ML}}=-\abs{\BTau}^{-1}\circ\abs{\BTau}$\\
$\Bmrm{W}^{\mathrm{CW}}$ \> \Cref{sec:sectionweights}, coherence weights, $\Bmrm{W}^{\mathrm{CW}}=\abs{\BTau}$\\
$\Bmrm{W}^{\mathrm{EW}}$ \> \Cref{sec:sectionweights}, equal weights, $\Bmrm{W}^{\mathrm{EW}}=\mathds{1}_N$\\
$\Bmrm{M},\,\mathrm{M}_{ij}$ \> \Cref{sec:sectionweights}, mask matrix and its $(i,j)$-entry\\
$\hat{\btheta}$ \> \Cref{sec:coefficients}, estimated real phase history\\
$\operatorname{AE}(\hat{\btheta})$ \> \Cref{eq:defae}, absolute (angular) error, $\operatorname{AE}(\hat{\btheta})=\|\varepsilon(\hat{\btheta})\|_2$\\
$\varepsilon(\hat{\btheta})$ \> \Cref{eq:defae}, wrapped componentwise phase error, $\varepsilon(\hat{\btheta})=\angle\exp(\iu(\hat{\btheta}-\btheta))$\\
$\mathbb{E}[\cdot]$ \> \Cref{eq:msecovbias}, expectation operator\\
$\operatorname{Cov}(\cdot)$ \> \Cref{eq:msecovbias}, covariance operator\\
$\operatorname{tr}(\cdot)$ \> \Cref{eq:msecovbias}, trace operator\\
$\operatorname{nAE}(\hat{\btheta})$ \> \Cref{eq:defnae}, normalized absolute error, $\operatorname{nAE}(\hat{\btheta})=\sqrt{3/(N\pi^2)}\,\operatorname{AE}(\hat{\btheta})$\\
$\Bmrm{I}_N$ \> \Cref{eq:worstcase}, $N\times N$ identity matrix\\
$\operatorname{corr}(\cdot,\cdot)$ \> \Cref{eq:corrnae}, correlation coefficient\\
$\mathcal{F}(\BTau,\hat{\btheta})$ \> \Cref{eq:intermediatestep}, normalized achieved objective value\\
$\gamma_{\mathrm{GOF}}(\BTau,\hat{\btheta})$ \> \Cref{eq:gofcoef}, goodness-of-fit coefficient\\
$\mathcal{F}_{\mathrm{noise}}$ \> \Cref{eq:gofcoef}, noise floor level, $\mathcal{F}_{\mathrm{noise}}=\mathbb{E}[\mathcal{F}(\BTau,\hat{\btheta})\mid \BSigma=\Bmrm{I}_N]$\\
$W$ \> \Cref{sec:gof}, sample size (number of SHPs), $W=\abs{\Omega}$\\
$\widehat{\mathcal{F}}_{\mathrm{noise}}(N,W)$ \> \Cref{eq:noisemodel}, fitted noise floor model\\
$\operatorname{RMSE}$ \> \Cref{tab:rational_fit_params}, root mean squared error\\
$\mathcal{U}$ \> \Cref{eq:ezero}, continuous uniform distribution\\
$\Re(\cdot)$ \> \Cref{eq:defwcpcoef}, real part\\
$\gamma_{\mathrm{CP}}(\BPhi)$ \> \Cref{eq:clippedcpcoef}, equal-weighted closure phase coefficient\\
$\gamma_{\mathrm{CPw}}(\BTau)$ \> \Cref{eq:clippedcpcoef}, coherence-weighted closure phase coefficient\\
$\hat{\Theta}_\C$ \> \Cref{eq:loglikelihood}, estimator of the complex linkage (set of estimated complex phase histories)\\
$\ell(\bvtheta)$ \> \Cref{eq:loglikelihood}, log-likelihood objective\\
$\operatorname{log}(\cdot)$ \> \Cref{eq:loglikelihood}, natural logarithm\\
$p(\cdot\mid\cdot)$ \> \Cref{eq:loglikelihood}, likelihood / conditional density\\
$\inner*{\boldsymbol{u}}{\boldsymbol{v}}$ \> \Cref{standardinner}, standard inner product on $\C^N$\\
$(\cdot)^\perp$ \> \Cref{standardinner}, orthogonal complement (w.r.t.\ the standard complex inner product)\\
$\gamma_{\mathrm{A}}$ \> \Cref{eq:ambcoef}, ambiguity coefficient\\
$\mu$ \> \Cref{eq:ambcoef}, dampening constant used in $\gamma_{\mathrm{A}}$\\
$\nu$ \> \Cref{sec:experiments}, noise level in the covariance mixture $(1-\nu)\,\Cov+\nu\,\Bmrm{I}_N$\\
$\operatorname{avg}(\cdot)$ \> \Cref{tab:global_corr_error_both}, arithmetic mean over the indicated pixel set\\
\end{tabbing}

\appendix

\section{Phase triangulation solvers}\label{sec:solvers}

While ED can be solved efficiently with standardized algorithms that do not require initialization and typically converge to a global optimum, PT relies on iterative solvers whose convergence guarantees are, at best, local. Consequently, a careful choice of solver and initialization is essential for reliable estimates. When available, a convenient strategy is to initialize PT with the output of the corresponding ED method. A standard solver for unconstrained nonlinear optimization is the Broyden--Fletcher--Goldfarb--Shanno (BFGS) algorithm~\cite[Chapter~6.1]{nocedal2006numerical}, as suggested by Ferretti et al.~\cite{ferretti2011new}. Vu et al.~\cite{vu2025covariance} propose an MM solver and Riemannian optimization as further options. A simple iterative method for the special case of the ML estimator is given in~\cite{montiguarnieri2008exploitation}:
\begin{equation}\label{eq:iterationtebaldini}
\theta_p^{(k+1)}\coloneqq\angle\left\{-\sum_{q\neq p}\mrm{W}_{pq}\Phi_{pq} \exp\!\big(\iu\theta_q^{(k)}\big)\right\}.
\end{equation}
Since the original reference provides only a brief derivation, we provide a short interpretation of this iteration in standard numerical-optimization terms. The update can be viewed as a modified Jacobi iteration: the objective in \Cref{eq:defPTreal} is equivalent to $\boldsymbol{v}^\herm\paren*{\Bmrm{W}\circ\BPhi}\boldsymbol{v}$, as shown in \Cref{lemma:realvalpt}. Setting the gradient of this expression with respect to $\boldsymbol{v}$ to zero yields the linear system $(\Bmrm{W}\circ\BPhi)\,\boldsymbol{v} = 0$, for which the Jacobi iteration (cf.~\cite[Eq.~(4.4)]{saad2007iterative}) is
\begin{equation}
v_p^{(k+1)}=\frac{-1}{\mrm{W}_{pp}}\sum_{q\neq p}\mrm{W}_{pq}\Phi_{pq}v_q^{(k)}.
\end{equation}
Updating only the phases, the factor $\mrm{W}_{pp}$ can be omitted and we obtain \Cref{eq:iterationtebaldini}.

We seek a method that makes minimal assumptions and admits a simple implementation, so that it can be extended to \Cref{eq:deforthEDPT}. The MM algorithm and the Jacobi iteration are straightforward to implement. However, the MM algorithm relies on the Hessian $\Bmrm{W}\circ\BPhi$ being positive semidefinite. Similarly, the Jacobi iteration relies on $\Bmrm{W}\circ\BPhi$ being diagonally dominant. These conditions are no longer guaranteed once we penalize the objective to enforce the additional orthogonality constraint. As an efficient, robust alternative that does not rely on such assumptions, we use the Polak-Ribière nonlinear conjugate gradient (PRCG) method~\cite[p.~121f.]{nocedal2006numerical}. In practice, for similar runtimes, PRCG yields slightly better results than Jacobi in strongly decorrelated areas. However, the most significant advantage of the PRCG method is that we can use it to compute the secondary solution needed to calculate the ambiguity coefficient. In the context of \Cref{eq:deforthEDPT}, we enforce the orthogonality constraint via a penalty approach, solving the following problem:
\begin{align} \label{eq:PTpen}
&\argmax_{\boldsymbol{\theta}\in{(-\pi,\pi]}^N} \left\{f\big(\BTau, \boldsymbol{\theta}\mkern1mu\big) - p\abs*{\inner*{\exp\!\big(\iu\btheta\big)}{\exp\!\big(\iu\hat{\btheta}^{(1)}\big)}}^2\right\}\nonumber\\=&\argmax_{\boldsymbol{\theta}\in{(-\pi,\pi]}^N}\left\{f\big(\BTau, \boldsymbol{\theta}\mkern1mu\big)-p\,\exp\!\big(\iu\btheta\big)^\herm\exp\!\big(\iu\hat{\btheta}^{(1)}\big)\exp\!\big(\iu\hat{\btheta}^{(1)}\big)^\herm\exp\!\big(\iu\btheta\big)\right\}\nonumber\\
=&\argmax_{\boldsymbol{\theta}\in{(-\pi,\pi]}^N}\left\{\sum_{i=1}^N\sum_{j=i+1}^N\mrm{W}_{ij}\cos\big(\phi_{ij} - (\theta_i-\theta_j)\big) - p\sum_{i=1}^N\sum_{j=i+1}^N\cos\big(({\hat{\theta}^{(1)}}_i-{\hat{\theta}^{(1)}}_j) - (\theta_i-\theta_j)\big)\right\}.
\end{align}

We multiply the penalty factor $p$ by two at each iteration, which results in \Cref{alg:secondary_phase_linking}. As initial values for the primary and secondary PL estimates, we use the eigenvectors associated with the largest and second-largest eigenvalues of $\Bmrm{W}\circ\BPhi$, respectively, since they are orthogonal by construction.

\vfill

\begin{algorithm}[ht]
\caption{Secondary phase linking via penalized PRCG }
\label{alg:secondary_phase_linking}
\setstretch{1.5}
\BlankLine

\KwIn{$\BPhi$, $\Bmrm{W}$, $\hat{\btheta}^{(1)}$, $\hat{\btheta}^{(2)}_{\text{init}}$, $\kappa$, $L$}
\KwOut{$\hat{\btheta}^{(2)}$ as in \Cref{eq:deforthEDPT}}

\BlankLine

$p \,\leftarrow\, \frac{2}{N(N - 1)}\sum_{i=1}^N\sum_{j=i+1}^N\mrm{W}_{ij}\cos\big(\phi_{ij} - (\hat{\theta}^{(1)}_i-\hat{\theta}^{(1)}_j)\big)$\;
$\hat{\btheta}^{(2)} \,\leftarrow\, \hat{\btheta}^{(2)}_{\text{init}}$\;
$l \,\leftarrow\, 0$\;

\BlankLine

\While{$l \,\le\, L$}{ 
  $\hat{\btheta}^{(2)} \;\leftarrow\; 
     \textbf{PRCG\_solve}\bigl(\text{Eq.\,\eqref{eq:PTpen}},\,\hat{\btheta}^{(2)}\bigr)$\,;
     \tcp*[f]{solve \eqref{eq:PTpen} by PRCG, initialized at $\hat{\btheta}^{(2)}$}

  \If{$\dfrac{1}{N}\,\abs*{\inner[\big]{\exp\!\big(\iu\hat{\btheta}^{(2)}\big)}{\exp\!\big(\iu\hat{\btheta}^{(1)}\big)}}
       \;<\; \kappa$}{
       \textbf{break} \tcp*[r]{orthogonality reached}
  }
  $p \,\leftarrow\, 2\,p$            \tcp*[r]{increase penalty}
  $l \,\leftarrow\, l + 1$\;
}
\end{algorithm}

\section{Linear Algebra, Stochastics and Optimization}

This section collects background facts used in the main text. Many of these results are well known, and we make no claim of originality. We include them to keep the manuscript self-contained and to present the required statements in the notation used throughout the paper. References are provided where appropriate.

\begin{lemma}\label[lemma]{lemma:cohposdef}
The sample coherence matrix as defined in \Cref{eq:coherencedef} is positive semidefinite.    
\end{lemma}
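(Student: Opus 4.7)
My plan is to reduce the statement to the well-known fact that the sample covariance matrix is positive semidefinite, and then observe that the coherence matrix is obtained from it by a congruence transformation with a real positive diagonal matrix, which preserves positive semidefiniteness.

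First, I would introduce the (unnormalized) sample covariance matrix $\widehat{\bf V}\in\C^{N\times N}$ with entries $\widehat{V}_{ij}\coloneqq\sum_{\boldsymbol{\omega}\in\Omega}\omega_i\overline{\omega_j}$. This can be rewritten as the sum of outer products $\widehat{\bf V}=\sum_{\boldsymbol{\omega}\in\Omega}\boldsymbol{\omega}\boldsymbol{\omega}^\herm$. For any $\boldsymbol{x}\in\C^N$, a direct calculation then gives
\begin{equation*}
\boldsymbol{x}^\herm\widehat{\bf V}\boldsymbol{x}=\sum_{\boldsymbol{\omega}\in\Omega}\boldsymbol{x}^\herm\boldsymbol{\omega}\boldsymbol{\omega}^\herm\boldsymbol{x}=\sum_{\boldsymbol{\omega}\in\Omega}\abs*{\boldsymbol{\omega}^\herm\boldsymbol{x}}^2\geq 0,
\end{equation*}
so $\widehat{\bf V}$ is Hermitian and positive semidefinite. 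In particular, every diagonal entry satisfies $\widehat{V}_{ii}=\sum_{\boldsymbol{\omega}\in\Omega}\abs*{\omega_i}^2\geq 0$.

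Next, I would connect $\widehat{\bf V}$ to $\widehat{\bf C}$ through the real diagonal matrix ${\bf D}\in\R^{N\times N}$ with entries $D_{ii}=\sqrt{\widehat{V}_{ii}}$. Comparing with \Cref{cohestimator} shows that, whenever all $D_{ii}$ are strictly positive, $\widehat{\bf C}={\bf D}^{-1}\widehat{\bf V}{\bf D}^{-1}$. Since ${\bf D}^{-1}$ is real and diagonal, it equals its own Hermitian transpose, so for any $\boldsymbol{x}\in\C^N$, setting $\boldsymbol{y}\coloneqq{\bf D}^{-1}\boldsymbol{x}$ yields $\boldsymbol{x}^\herm\widehat{\bf C}\boldsymbol{x}=\boldsymbol{y}^\herm\widehat{\bf V}\boldsymbol{y}\geq 0$ by the previous step. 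Hermitian symmetry of $\widehat{\bf C}$ follows from that of $\widehat{\bf V}$, so $\widehat{\bf C}$ is positive semidefinite.

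The only genuinely delicate point is the handling of a vanishing denominator, i.e.\ a diagonal entry $\widehat{V}_{ii}=0$, in which case the corresponding row and column of $\widehat{\bf C}$ are not defined by \Cref{cohestimator}. This however is a strictly pathological case: $\widehat{V}_{ii}=0$ forces $\omega_i=0$ for every $\boldsymbol{\omega}\in\Omega$, so the $i$th acquisition carries no signal at all and the row and column $i$ of $\widehat{\bf V}$ are identically zero. I would dispose of this case by adopting the standard convention that the corresponding entries of $\widehat{\bf C}$ are set to zero (or, equivalently, by restricting to the non-degenerate principal submatrix and padding with zeros); then $\widehat{\bf C}$ is still a congruence of a suitable principal submatrix of $\widehat{\bf V}$ and the same argument applies. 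I expect this edge case to be the only subtle aspect; the remainder of the argument is a one-line congruence computation.
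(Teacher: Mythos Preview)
Your proof is correct and follows essentially the same approach as the paper: both write $\widehat{\bf C}={\bf D}^{-1}\bigl(\sum_{\boldsymbol{\omega}\in\Omega}\boldsymbol{\omega}\boldsymbol{\omega}^\herm\bigr){\bf D}^{-1}$ with the same diagonal matrix ${\bf D}$ and then observe that the resulting quadratic form is a sum of squared moduli. The paper compresses this into a single chain of equalities and silently assumes ${\bf D}$ is invertible, whereas you additionally discuss the degenerate case $\widehat{V}_{ii}=0$; otherwise the arguments are identical.
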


\begin{proof}
We define \[\boldsymbol{\mrm{D}}\coloneqq\operatorname{diag}\begin{pmatrix}
        \sqrt{\textstyle\sum_{\boldsymbol{\omega}\in\Omega}\abs*{\omega_1}^2} & \cdots & \sqrt{\textstyle\sum_{\boldsymbol{\omega}\in\Omega}\abs*{\omega_N}^2}
    \end{pmatrix}\in\R^{N\times N}.\]
Then $\BTau = \boldsymbol{\mrm{D}}^{-1}\big(\sum_{\boldsymbol{\omega}\in\Omega}\boldsymbol{\omega}\boldsymbol{\omega}^\herm\big)\boldsymbol{\mrm{D}}^{-1}$ and thus, for all $\boldsymbol{z}\in\C^N$,
\[
\boldsymbol{z}^\herm\BTau\boldsymbol{z}
= \sum_{\boldsymbol{\omega}\in\Omega}
\boldsymbol{z}^\herm \boldsymbol{\mrm{D}}^{-1}\boldsymbol{\omega}\boldsymbol{\omega}^\herm \boldsymbol{\mrm{D}}^{-1}\boldsymbol{z}
= \sum_{\boldsymbol{\omega}\in\Omega}\abs{\boldsymbol{\omega}^\herm \boldsymbol{\mrm{D}}^{-1}\boldsymbol{z}}^2
\ge 0,
\]
which proves that $\BTau$ is positive semidefinite.
\end{proof}

\begin{lemma}\label[lemma]{lemma:uniquedecomp}
Let $\Bmrm{A} \in \C^{N\times N}$ be Hermitian with $\operatorname{rank}(\Bmrm{A})=1$. Then:
\begin{itemize}
    \item There exist $\lambda\in\R$ and $\boldsymbol{v}\in\C^N$ with $\norm*{\boldsymbol{v}}_2=1$ such that
    $\Bmrm{A}=\lambda\, \boldsymbol{v}\boldsymbol{v}^\herm$.
    \item If $\operatorname{tr}(\Bmrm{A})>0$, then there exists $\boldsymbol{w}\in\C^N$ such that
    $\Bmrm{A}=\boldsymbol{w}\boldsymbol{w}^\herm$.
    \item If the diagonal entries of $\Bmrm{A}$ are all one, then choosing $\boldsymbol{w}$ as any column of
    $\Bmrm{A}$ yields $\Bmrm{A}=\boldsymbol{w}\boldsymbol{w}^\herm$ and $\boldsymbol{w}\in\mathbb{T}^N$.
\end{itemize}
\end{lemma}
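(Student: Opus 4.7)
The plan is to obtain all three statements from the spectral theorem applied to the Hermitian matrix $\bf A$. Since $\bf A$ is Hermitian, it admits a unitary diagonalization with real eigenvalues, $\bf A = \sum_{i=1}^N \lambda_i \boldsymbol{u}_i\boldsymbol{u}_i^{\herm}$, where $\{\boldsymbol{u}_i\}$ is an orthonormal basis. The rank-one assumption forces exactly one eigenvalue $\lambda$ to be nonzero; let $\boldsymbol{v}$ denote the associated unit eigenvector. This immediately yields ${\bf A} = \lambda \boldsymbol{v}\boldsymbol{v}^{\herm}$ with $\lambda\in\R$ and $\|\boldsymbol{v}\|_2 = 1$, which is the first claim.

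For the second claim, I would compute $\operatorname{tr}({\bf A}) = \lambda\operatorname{tr}(\boldsymbol{v}\boldsymbol{v}^{\herm}) = \lambda \|\boldsymbol{v}\|_2^2 = \lambda$, so the positivity hypothesis gives $\lambda > 0$. Setting $\boldsymbol{w} \coloneqq \sqrt{\lambda}\,\boldsymbol{v}$ then produces ${\bf A} = \boldsymbol{w}\boldsymbol{w}^{\herm}$, as required.

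For the third claim, I would combine the decomposition with the unit-diagonal hypothesis: $1 = \mathrm{A}_{ii} = \lambda |v_i|^2$ for every $i$, so in particular $\lambda > 0$ and $\lambda |v_j|^2 = 1$ for each $j$. The $j$th column of $\bf A$ is ${\bf A}\boldsymbol{e}_j = \lambda\boldsymbol{v}\boldsymbol{v}^{\herm}\boldsymbol{e}_j = \lambda\overline{v_j}\,\boldsymbol{v}$. Taking $\boldsymbol{w}$ to be this column gives
\begin{equation*}
\boldsymbol{w}\boldsymbol{w}^{\herm} = \lambda\overline{v_j}\,\boldsymbol{v}\cdot\lambda v_j\,\boldsymbol{v}^{\herm} = \lambda^2 |v_j|^2\,\boldsymbol{v}\boldsymbol{v}^{\herm} = \lambda\,\boldsymbol{v}\boldsymbol{v}^{\herm} = {\bf A}.
\end{equation*}

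There is no real obstacle here: the only mild subtlety is recognizing that the unit-diagonal condition simultaneously pins down the magnitudes of the entries of $\boldsymbol{v}$ and fixes the scalar $\lambda |v_j|^2 = 1$ that makes an unscaled column already have the correct norm. Everything else is a direct consequence of the spectral theorem applied to a rank-one Hermitian matrix.
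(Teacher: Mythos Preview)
Your proof is correct and follows essentially the same route as the paper: both invoke the spectral theorem to obtain ${\bf A}=\lambda\boldsymbol{v}\boldsymbol{v}^{\herm}$ with a single nonzero real eigenvalue, deduce $\lambda=\operatorname{tr}({\bf A})>0$ for the second claim, and then use the unit-diagonal condition $\lambda|v_i|^2=1$ for the third. Your handling of the third item is in fact slightly more direct than the paper's (which first invokes the second claim and then identifies the column with a phase-rotated $\boldsymbol{w}$), but the underlying idea is identical.
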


\begin{proof}
Since $\Bmrm{A}$ is Hermitian, it admits an ED
$\Bmrm{A}=\Bmrm{U}\Bmrm{D}\Bmrm{U}^\herm$,
where $\Bmrm{U}\in\C^{N\times N}$ is unitary and $\Bmrm{D}\in\R^{N\times N}$ is diagonal
(see~\cite[Thm.~2.5.6]{horn2013matrix}).
Because $\operatorname{rank}(\Bmrm{A})=1$, $\Bmrm{A}$ has exactly one nonzero eigenvalue
$\lambda\in\R$. Let $\boldsymbol{u}$ be a corresponding unit eigenvector. Then
$\Bmrm{A}=\lambda\,\boldsymbol{u}\boldsymbol{u}^\herm$,
which proves the first claim with $\boldsymbol{v}\coloneqq \boldsymbol{u}$.

Moreover, $\operatorname{tr}(\Bmrm{A})$ equals the sum of the eigenvalues, hence
$\operatorname{tr}(\Bmrm{A})>0$ implies $\lambda>0$.
Setting $\boldsymbol{w}\coloneqq \sqrt{\lambda}\,\boldsymbol{u}$ yields
$\Bmrm{A}=\boldsymbol{w}\boldsymbol{w}^\herm$, proving the second claim.

For the third claim, note that $\operatorname{diag}(\Bmrm{A})=\boldsymbol{1}$ implies
$\operatorname{tr}(\Bmrm{A})=N>0$, so by the second claim there exists $\boldsymbol{w}$ with
$\Bmrm{A}=\boldsymbol{w}\boldsymbol{w}^\herm$.
Then $1=\mrm{A}_{ii}=\abs{w_i}^2$ for all $i$, so $\boldsymbol{w}\in\mathbb{T}^N$.
Let $\boldsymbol{a}^{(p)}$ denote the $p$-th column of $\Bmrm{A}$. Since
$\Bmrm{A}=\boldsymbol{w}\boldsymbol{w}^\herm$, we have
$\boldsymbol{a}^{(p)}=\boldsymbol{w}\,\overline{w_p}$, and therefore
\[
\boldsymbol{a}^{(p)}(\boldsymbol{a}^{(p)})^\herm
= \boldsymbol{w}\,\overline{w_p}\, w_p\,\boldsymbol{w}^\herm
= \abs{w_p}^2\,\boldsymbol{w}\boldsymbol{w}^\herm
= \Bmrm{A},
\]
because $\abs{w_p}=1$. Hence choosing $\boldsymbol{w}=\boldsymbol{a}^{(p)}$ (any column) yields
$\Bmrm{A}=\boldsymbol{w}\boldsymbol{w}^\herm$.
\end{proof}

\begin{proposition}\label[proposition]{proposition:rankone}
The following equivalences and implications hold in the context of \Cref{sec:cohmat}:
\begin{align*}
    &\operatorname{rank}\big(\BTau\big) = 1 \; &&\Leftrightarrow
\quad \exists\,\bvtheta\in\mathbb{T}^N:\;\BTau = \bvtheta \bvtheta^{\herm} \; &&\Leftrightarrow
\quad \forall\,i,j:\;\abs{\Tau_{ij}}=1 \\ &\;\Downarrow && &&
\\ &\operatorname{rank}\big(\BPhi\big) = 1 \; &&\Leftrightarrow
\quad \exists\,\bvtheta\in\mathbb{T}^N:\;\BPhi = \bvtheta \bvtheta^{\herm} \; &&\Leftrightarrow
\quad \forall\,i,j,k:\;\phi^\Delta_{ijk}=0.
\end{align*}
\end{proposition}

\begin{proof}
    Since $\BTau$ and $\BPhi$ are Hermitian with unit diagonal, the first equivalence in both lines follows from \Cref{lemma:uniquedecomp}. The rest is proven by the following deductions: \begin{align}&\mrm{I})&&\exists\,\bvtheta\in\mathbb{T}^N:\BTau = \bvtheta \bvtheta^\herm\quad &&\Rightarrow\quad \forall\,i,j:\abs{\Tau_{ij}}=\abs*{\vartheta_i\overline{\vartheta_j}}=1\nonumber\\
    &\mrm{II})&&\forall\,i,j:\abs{\Tau_{ij}}=1\quad &&\Rightarrow\quad \exists\,\boldsymbol{\alpha}\in\C^N\;\forall\,\boldsymbol{\omega}\in\Omega\;\forall\,i:\omega_{i}=\alpha_{i}\omega_{1}\quad\text{(Cauchy-Schwarz)}\nonumber\\& && &&\Rightarrow\quad \exists\,\boldsymbol{\alpha}\in\C^N\;\forall\,i,j:\Tau_{ij}=\frac{\alpha_i\overline{\alpha_j}}{\abs*{\alpha_i}\abs*{\alpha_j}}\;\,=\vartheta_i\overline{\vartheta_j}\quad\text{for}\quad \vartheta_i\coloneqq\frac{\alpha_i}{\abs*{\alpha_i}}\nonumber\\& && &&\Rightarrow\quad \exists\,\bvtheta\in\mathbb{T}^N:\BTau = \bvtheta\bvtheta^\herm\nonumber\\ 
    &\mrm{III})&&\forall\,i,j:\abs{\Tau_{ij}}=1\quad &&\Rightarrow\quad \BTau=\abs{\BTau}\circ\BPhi=\BPhi\quad\Rightarrow\quad \mrm{rank}\big(\BPhi\big)=\mrm{rank}\big(\BTau\big)=1\nonumber\\ 
    &\mrm{IV})\quad &&\exists\,\bvtheta\in\mathbb{T}^N:\BPhi = \bvtheta \bvtheta^\herm&&\Rightarrow\quad\forall\,i,j,k:\Phi_{ij}\Phi_{jk}\Phi_{ki}=\vartheta_i\,\overline{\vartheta_j}\,\vartheta_j\,\overline{\vartheta_k}\,\vartheta_k\,\overline{\vartheta_i}=1\quad\Rightarrow\quad \forall\,i,j,k:\phi^\Delta_{ijk}=0\nonumber\\
    &\mrm{V})&&\forall\,i,j,k:\ \phi^\Delta_{ijk}=0
&&\Rightarrow\quad \forall\,i,j,k:\ \Phi_{ij}\Phi_{jk}=\Phi_{ik} \nonumber\\
& && &&\Rightarrow\quad \exists\,j\ \forall\,i,k:\ \Phi_{ik}=\Phi_{ij}\overline{\Phi_{kj}}
\quad\Rightarrow\quad \BPhi=\bvtheta\bvtheta^\herm,
    \end{align}
    where, in the last equation, $\bvtheta\in\mathbb{T}^N$ is the $j$-th column of $\BPhi$.
\end{proof}

\begin{lemma}\label[lemma]{lemma:realvalpt}
Let $\Bmrm{A} \in\C^{N\times N}$ be Hermitian and $\boldsymbol{v}\in\C^N$. For $i\neq j$, define
$a_{ij}\coloneqq\abs{\mrm{A}_{ij}}$ and $\phi_{ij}\coloneqq\angle \mrm{A}_{ij}$, and let $\varphi_{i}\coloneqq\angle v_i$.
Then it holds that
\begin{equation}
\inner*{\boldsymbol{v}\boldsymbol{v}^\herm}{\Bmrm{A}}_\frob
= \boldsymbol{v}^\herm\Bmrm{A}\boldsymbol{v}
= \sum_{i=1}^N \mrm{A}_{ii}\abs{v_i}^2
+ 2\sum_{i=1}^N\sum_{j=i+1}^N a_{ij}\abs{v_i}\abs{v_j}\cos\big(\phi_{ij}-(\varphi_i-\varphi_j)\big).
\end{equation}
\end{lemma}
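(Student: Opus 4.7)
The plan is to establish the two equalities separately. First I would show that $\inner{\boldsymbol{v}\boldsymbol{v}^\herm}{{\bf A}}_\frob = \boldsymbol{v}^\herm {\bf A} \boldsymbol{v}$ by unfolding the Frobenius inner product as a trace and invoking its cyclic property:
\begin{equation*}
\inner{\boldsymbol{v}\boldsymbol{v}^\herm}{{\bf A}}_\frob = \operatorname{tr}\bigl((\boldsymbol{v}\boldsymbol{v}^\herm)^\herm {\bf A}\bigr) = \operatorname{tr}(\boldsymbol{v}\boldsymbol{v}^\herm {\bf A}) = \operatorname{tr}(\boldsymbol{v}^\herm {\bf A}\boldsymbol{v}) = \boldsymbol{v}^\herm {\bf A} \boldsymbol{v},
\end{equation*}
where the last expression is a scalar and hence equal to its own trace.

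Next I would expand the quadratic form entry-wise as $\boldsymbol{v}^\herm {\bf A}\boldsymbol{v} = \sum_{i,j} \overline{v_i}\,\mathrm{A}_{ij}\,v_j$ and split the double sum into diagonal contributions and off-diagonal contributions. The diagonal terms yield $\sum_i \mathrm{A}_{ii}|v_i|^2$, and since ${\bf A}$ is Hermitian, $\mathrm{A}_{ii}$ is real and equals $a_{ii}$, giving the first sum in the target expression. For the off-diagonal part, I would pair the $(i,j)$ and $(j,i)$ terms: using $\mathrm{A}_{ji}=\overline{\mathrm{A}_{ij}}$, the pair collapses to
\begin{equation*}
\overline{v_i}\,\mathrm{A}_{ij}\,v_j + \overline{v_j}\,\mathrm{A}_{ji}\,v_i = \overline{v_i}\,\mathrm{A}_{ij}\,v_j + \overline{\overline{v_i}\,\mathrm{A}_{ij}\,v_j} = 2\,\RE\bigl(\overline{v_i}\,\mathrm{A}_{ij}\,v_j\bigr).
\end{equation*}

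Finally I would insert the polar representations $v_i = |v_i|\exp(\iu\varphi_i)$ and $\mathrm{A}_{ij} = a_{ij}\exp(\iu\phi_{ij})$, so that $\overline{v_i}\,\mathrm{A}_{ij}\,v_j = a_{ij}|v_i||v_j|\exp\bigl(\iu(\phi_{ij}-(\varphi_i-\varphi_j))\bigr)$, and take real parts to obtain the cosine factor. Summing the paired contributions over $i<j$ produces the factor of two in front of the second sum and finishes the identity.

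I do not expect a main obstacle here; every step is an elementary manipulation. The only point requiring mild care is keeping the sign conventions consistent in the polar decomposition so that the argument of the cosine comes out as $\phi_{ij}-(\varphi_i-\varphi_j)$ rather than its negative, which is harmless because the cosine is even but should be written as stated for consistency with the referenced equations.
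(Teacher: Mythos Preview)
Your proposal is correct and follows essentially the same approach as the paper: both expand the quadratic form entry-wise, separate diagonal from off-diagonal terms, pair the $(i,j)$ and $(j,i)$ contributions via Hermitian symmetry, and insert polar representations to obtain the cosine. The only cosmetic differences are that the paper establishes the Frobenius inner product equality directly from the double sum rather than via the trace cyclic property, and it writes the paired off-diagonal terms as $\exp(\iu x)+\exp(-\iu x)=2\cos(x)$ instead of $2\operatorname{Re}(\cdot)$.
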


\begin{proof}
Since $\Bmrm{A}$ is Hermitian, $\mrm{A}_{ii}\in\R$ for all $i$, and
\[
\boldsymbol{v}^\herm\Bmrm{A}\boldsymbol{v}
=\sum_{i=1}^N\sum_{j=1}^N \mrm{A}_{ij}\overline{v_i}v_j
=\inner*{\boldsymbol{v}\boldsymbol{v}^\herm}{\Bmrm{A}}_\frob.
\]
For $i\neq j$, write $\mrm{A}_{ij}=a_{ij}\exp(\iu\phi_{ij})$ and $v_i=\abs{v_i}\exp(\iu\varphi_i)$. Splitting diagonal and
off-diagonal terms and using $\mrm{A}_{ji}=\overline{\mrm{A}_{ij}}$, we obtain
\begin{align*}
\boldsymbol{v}^\herm\Bmrm{A}\boldsymbol{v}
&= \sum_{i=1}^N \mrm{A}_{ii}\abs{v_i}^2
+ \sum_{i=1}^N\sum_{j=i+1}^N\Big(\mrm{A}_{ij}\overline{v_i}v_j+\mrm{A}_{ji}\overline{v_j}v_i\Big)\\
&= \sum_{i=1}^N \mrm{A}_{ii}\abs{v_i}^2
+ \sum_{i=1}^N\sum_{j=i+1}^N a_{ij}\abs{v_i}\abs{v_j}
\Big(\exp\big(\iu(\phi_{ij}-(\varphi_i-\varphi_j))\big)
+\exp\big(-\iu(\phi_{ij}-(\varphi_i-\varphi_j))\big)\Big)\\
&= \sum_{i=1}^N \mrm{A}_{ii}\abs{v_i}^2
+ 2\sum_{i=1}^N\sum_{j=i+1}^N a_{ij}\abs{v_i}\abs{v_j}
\cos\big(\phi_{ij}-(\varphi_i-\varphi_j)\big).
\end{align*}
\end{proof}

\begin{proposition}\label[proposition]{proposition:lamie}
    The LaMIE estimator in~\cite{bai2023lamie} is equivalent to \Cref{eq:defPLmethods} with weight matrix $\Bmrm{W}=\abs{\BTau}\circ\abs{\BTau}$.
\end{proposition}

\begin{proof}
As in~\cite[Eq.~(13)]{bai2023lamie}, LaMIE solves
\begin{equation}\label{eq:lamie_orig}
\argmin_{\boldsymbol{v}\in\mathbb{T}^N}
\norm*{\operatorname{diag}(\boldsymbol{v})\,\abs{\BTau}\,\operatorname{diag}(\boldsymbol{v})^\herm-\BTau}_\frob^2.
\end{equation}
Set $\Bmrm{X}(\boldsymbol{v})\coloneqq \operatorname{diag}(\boldsymbol{v})\,\abs{\BTau}\,\operatorname{diag}(\boldsymbol{v})^\herm$.
Then
\begin{align}
\norm{\Bmrm{X}(\boldsymbol{v})-\BTau}_\frob^2
&= \inner*{\Bmrm{X}(\boldsymbol{v}) - \BTau}{\Bmrm{X}(\boldsymbol{v}) - \BTau}_\frob = \norm{\Bmrm{X}(\boldsymbol{v})}_\frob^2 + \norm{\BTau}_\frob^2 - 2\operatorname{Re}\inner*{\Bmrm{X}(\boldsymbol{v})}{\BTau}_\frob.\label{eq:lamie_expand_inner}
\end{align}
For $\boldsymbol{v}\in\mathbb{T}^N$, $\Bmrm{X}(\boldsymbol{v})$ is obtained from $\abs{\BTau}$ by entrywise phase factors of unit modulus, hence
$\norm{\Bmrm{X}(\boldsymbol{v})}_\frob=\norm{\abs{\BTau}}_\frob$, which is independent of $\boldsymbol{v}$.
Therefore, minimizing \eqref{eq:lamie_orig} is equivalent to maximizing $\operatorname{Re}\inner*{\Bmrm{X}(\boldsymbol{v})}{\BTau}_\frob$.
\begin{align*}
\inner*{\Bmrm{X}(\boldsymbol{v})}{\BTau}_\frob
&=\inner*{\abs{\BTau}\circ\boldsymbol{v}\boldsymbol{v}^\herm}{\abs{\BTau}\circ\BPhi}_\frob
=\inner*{\boldsymbol{v}\boldsymbol{v}^\herm}{\abs{\BTau}\circ\abs{\BTau}\circ\BPhi}_\frob=\inner*{\boldsymbol{v}\boldsymbol{v}^\herm}{\Bmrm{W}\circ\BPhi}_\frob,
\end{align*}
where $\Bmrm{W}\coloneqq \abs{\BTau}\circ\abs{\BTau}$. This quantity is real since $\Bmrm{W}\circ\BPhi$ is Hermitian (\Cref{lemma:realvalpt}); so the real-part operator can be dropped. Hence,
\[
\argmin_{\boldsymbol{v}\in\mathbb{T}^N}\norm*{\operatorname{diag}(\boldsymbol{v})\,\abs{\BTau}\,\operatorname{diag}(\boldsymbol{v})^\herm-\BTau}_\frob^2
=
\argmax_{\boldsymbol{v}\in\mathbb{T}^N}\inner*{\boldsymbol{v}\boldsymbol{v}^\herm}{\Bmrm{W}\circ\BPhi}_\frob,
\]
which is \Cref{eq:defPLmethods} with $\mathcal{X}=\mathbb{T}^N$ and $\Bmrm{W}=\abs{\BTau}\circ\abs{\BTau}$.
\end{proof}

\begin{proposition}[Inverse of the AR(1) correlation matrix]\label[proposition]{prop:ar1_inverse}
Let $N\ge 2$ and let $\zeta\in\mathbb{R}$ satisfy $0<\zeta<1$. Define the symmetric matrix
$\Bmrm{A}\in\mathbb{R}^{N\times N}$ by
\[
\forall\,i,j:\mrm{A}_{ij}=\zeta^{|i-j|}.
\]
Then $\Bmrm{A}$ has the inverse
\[
\Bmrm{A}^{-1}=\frac{1}{1-\zeta^2}\begin{pmatrix}
1 & -\zeta & 0 & \cdots & 0\\
-\zeta & 1+\zeta^2 & -\zeta & \ddots & \vdots\\
0 & -\zeta & 1+\zeta^2 & \ddots & 0\\
\vdots & \ddots & \ddots & \ddots & -\zeta\\
0 & \cdots & 0 & -\zeta & 1
\end{pmatrix}.
\]
\end{proposition}

\begin{proof}
The proposition is proven by straightforward calculation of $\Bmrm{A}^{-1}\Bmrm{A}$, given the definitions of $\Bmrm{A}$ and $\Bmrm{A}^{-1}$.
\end{proof}

\begin{lemma}\label[lemma]{lemma:eq:gofcoef01} Denote $\Pi^N\coloneqq{(-\pi,\pi]}^N$. In the context of \Cref{eq:defPTreal}, it holds that 
\begin{equation}
        \max_{\boldsymbol{\theta}\in\Pi^N}\left\{\sum_{i=1}^N\sum_{j=i+1}^N\mrm{W}_{ij}\cos\big(\phi_{ij} - (\theta_i-\theta_j)\big)\right\}\geq0.
    \end{equation}
\end{lemma}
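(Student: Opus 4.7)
My strategy is a simple averaging argument: the maximum of a function over a set is at least the average of that function with respect to any probability measure on the same set. I will exhibit a distribution on $\Pi^N$ for which the expectation of the objective vanishes, which delivers the claim immediately.

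Concretely, I would let $\boldsymbol{\Theta}=(\Theta_1,\dots,\Theta_N)$ be a random vector with independent components $\Theta_i\sim\mathcal{U}(-\pi,\pi]$ and consider the random variable
\begin{equation*}
f(\boldsymbol{\Theta}) \;=\; \sum_{i=1}^{N}\sum_{j=i+1}^{N} \mathrm{W}_{ij}\cos\bigl(\phi_{ij}-(\Theta_i-\Theta_j)\bigr).
\end{equation*}
For each fixed pair $i<j$, I would use Fubini and condition on $\Theta_j$: the inner integrand has the form $\cos(a-\Theta_i)$ for the constant $a=\phi_{ij}+\Theta_j$, and $\int_{-\pi}^{\pi}\cos(a-x)\,\dd x=0$ by $2\pi$-periodicity of cosine. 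Hence every term of the double sum has expectation zero, and by linearity $\mathbb{E}[f(\boldsymbol{\Theta})]=0$.

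Then I would note that $f$ is continuous and $2\pi$-periodic in each coordinate, so its supremum over the half-open box $\Pi^N=(-\pi,\pi]^N$ coincides with its maximum over the compact quotient torus $(\mathbb{R}/2\pi\mathbb{Z})^N$ and is therefore attained. Since the maximum of any real random variable bounds its expectation from above,
\begin{equation*}
\max_{\boldsymbol{\theta}\in\Pi^N} f(\boldsymbol{\theta}) \;\geq\; \mathbb{E}[f(\boldsymbol{\Theta})] \;=\; 0,
\end{equation*}
which is the desired inequality.

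The main obstacle is essentially cosmetic: the only subtlety is ensuring that the maximum is actually attained (so the bound $\max\ge \mathbb{E}$ is rigorous rather than a $\sup\ge \mathbb{E}$ statement), which is handled by the periodicity argument reducing the domain to a compact torus. A constructive alternative would be to compare the values of $f$ at a point and at a reflection or $\pi$-shift of it, but tracking the signs quickly becomes unwieldy, and the probabilistic averaging argument above is both shorter and independent of any sign assumption on ${\bf W}$.
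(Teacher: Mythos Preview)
Your averaging argument is correct: each summand has zero mean under i.i.d.\ uniform phases because $\int_{-\pi}^{\pi}\cos(a-x)\,\dd x=0$, linearity gives $\mathbb{E}[f(\boldsymbol{\Theta})]=0$, and the periodicity/compactness remark legitimately upgrades $\sup\ge\mathbb{E}$ to $\max\ge\mathbb{E}$. The paper takes a different, constructive route: it first absorbs negative weights via the identity $-\cos x=\cos(x-\pi)$ to reduce to $\mathrm{W}_{ij}\ge 0$, and then argues by induction on $N$, at each step appending a new coordinate $\theta_{N+1}$ chosen as the phase that aligns the residual sum $\sum_{i\le N}\mathrm{W}_{i,N+1}\exp(\iu(\phi_{i,N+1}-\hat\theta_i))$ so that the added contribution is its modulus and hence non-negative. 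Your approach is shorter, requires no sign reduction, and works verbatim for arbitrary real $\mathrm{W}_{ij}$; the paper's approach, on the other hand, is constructive and actually exhibits a feasible $\boldsymbol{\theta}$ with non-negative objective (indeed, it yields a greedy initialization scheme), which is of independent algorithmic interest in the context of \Cref{sec:initialvals}.
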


\begin{proof}
The entries of $\Bmrm{W}$ may be negative. In the context of \Cref{eq:defPTreal}, negative weights can be interpreted as weights accompanied by a phase shift. Using $-\cos(x)=\cos(x-\pi)$, define
\begin{equation}
\tilde{\phi}_{ij}\coloneqq
\begin{cases}
\phi_{ij}, & \mrm{W}_{ij}\ge 0,\\
\phi_{ij}-\pi, & \mrm{W}_{ij}<0.
\end{cases}
\end{equation}
Then
\begin{equation}
\mrm{W}_{ij}\cos\big(\phi_{ij}-(\theta_i-\theta_j)\big)
=\abs{\mrm{W}_{ij}}\cos\big(\tilde{\phi}_{ij}-(\theta_i-\theta_j)\big).
\end{equation}
Hence, without loss of generality, we may assume $\mrm{W}_{ij}\ge 0$ for all $i,j$.

We prove the claim by induction on $N$. For the base case $N=2$, choose
$\btheta\coloneqq(\phi_{12},0)^\tran$. Then
\[
\sum_{i=1}^2\sum_{j=i+1}^2\mrm{W}_{ij}\cos\big(\phi_{ij}-(\theta_i-\theta_j)\big)
=\mrm{W}_{12}\cos(0)=\mrm{W}_{12}\ge 0.
\]

For the induction step, assume the statement holds for $N$ and consider $N+1$. Pick a maximizer \[
\hat{\btheta}\in\argmax_{\boldsymbol{\theta}\in\Pi^N}\left\{\sum_{i=1}^N\sum_{j=i+1}^N\mrm{W}_{ij}\cos\big(\phi_{ij}-(\theta_i-\theta_j)\big)\right\}
\]
and let
\[
\zeta\coloneqq \sum_{i=1}^N\sum_{j=i+1}^N\mrm{W}_{ij}\cos\big(\phi_{ij}-(\hat{\theta}_i-\hat{\theta}_j)\big)\ge 0.
\]
Define
\[
\omega_{N+1}\coloneqq
-\angle\left(\sum_{i=1}^{N}\mrm{W}_{i,N+1}\exp\big(\iu(\phi_{i,N+1}-\hat{\theta}_i)\big)\right).
\]
Then
\begin{align}
&\max_{\boldsymbol{\theta}\in\Pi^{N+1}}\left\{\sum_{i=1}^{N+1}\sum_{j=i+1}^{N+1}\mrm{W}_{ij}\cos\big(\phi_{ij}-(\theta_i-\theta_j)\big)\right\}\nonumber\\
&= \max_{\boldsymbol{\theta}\in\Pi^{N+1}}\left\{\sum_{i=1}^{N}\sum_{j=i+1}^{N}\mrm{W}_{ij}\cos\big(\phi_{ij} - (\theta_i-\theta_j)\big)+\sum_{i=1}^{N}\mrm{W}_{iN+1}\cos\big(\phi_{i,N+1} - (\theta_i-\theta_{N+1})\big)\right\}\nonumber\\
&\ge \zeta+\max_{\theta_{N+1}\in\Pi^1}\left\{\sum_{i=1}^{N}\mrm{W}_{i,N+1}\cos\big(\phi_{i,N+1}-(\hat{\theta}_i-\theta_{N+1})\big)\right\}\nonumber\\
&\ge \zeta+\sum_{i=1}^{N}\mrm{W}_{i,N+1}\cos\big(\phi_{i,N+1}-(\hat{\theta}_i-\omega_{N+1})\big)\nonumber\\
&= \zeta+\operatorname{Re}\!\left(\sum_{i=1}^{N}\mrm{W}_{i,N+1}\exp\big(\iu(\phi_{i,N+1}-\hat{\theta}_i)\big)\exp(\iu\omega_{N+1})\right)\nonumber\\
&= \zeta+\abs*{\sum_{i=1}^{N}\mrm{W}_{i,N+1}\exp\big(\iu(\phi_{i,N+1}-\hat{\theta}_i)\big)}\ \ge\ 0,
\end{align}
which completes the induction.
\end{proof}

\begin{proposition}\label[proposition]{prop:gof_bounds}
In the context of \Cref{sec:basis}, the bounds listed in \Cref{tab:gof_bounds} satisfy:
\begin{enumerate}[(a)]
\item $f_{\min}(\BTau)\le f(\BTau,\boldsymbol{\hat{\theta}})$,
\item $f(\BTau,\boldsymbol{\hat{\theta}})\le f_{\max}(\BTau)$,
\item If $\BTau=\BSigma$ and $\BSigma=\abs{\BSigma}\circ\bvtheta\bvtheta^\herm$ for some $\bvtheta\in\mathbb{T}^N$, then $f(\BSigma,\boldsymbol{\hat{\theta}})=f_{\max}(\BSigma)$.
\end{enumerate}
For PT, (c) additionally requires $\mrm{W}_{ij}\ge 0$ for all $i<j$.
\end{proposition}

\begin{proof}
We divide the proof by method category.

\begin{itemize}
    \item \textbf{PT methods:} (a) is shown in \Cref{lemma:eq:gofcoef01}. (b) follows directly from the bounds of the cosine. For (c), assume $\mrm{W}_{ij}\ge 0$ for all $i<j$ and choose $\boldsymbol{\hat{\theta}}=\angle\bvtheta$. Then $\phi_{ij}=\hat{\theta}_i - \hat{\theta}_j$ and therefore all cosine terms equal $1$. Hence
$f(\BSigma,\boldsymbol{\hat{\theta}})=\sum_{i<j}\mrm{W}_{ij}=f_{\max}(\BSigma)$, and $\boldsymbol{\hat{\theta}}$ is a maximizer.

    \item \textbf{ED methods, (a):}
In all ED cases, we have $f(\BTau,\boldsymbol{\hat{\theta}})=\lambda_{\max}(\Bmrm{A})$ for the Hermitian matrix $\Bmrm{A}=\Bmrm{W}\circ\BPhi$. The largest eigenvalue is bounded below by the average of all eigenvalues:
$f(\BTau,\boldsymbol{\hat{\theta}})\ge N^{-1}\operatorname{tr}(\Bmrm{A})$,
with equality when all eigenvalues are equal (e.g., when $\Bmrm{A}$ is a multiple of the identity).
In the EW and CW cases, $\Bmrm{A}$ has unit diagonal and thus $\operatorname{tr}(\Bmrm{A})=N$,
so the bound simplifies to $1$.
\item \textbf{CW-ED, (b) and (c):} Let $\boldsymbol{x}\in\C^N$ with $\|\boldsymbol{x}\|_2=1$, and set \[\boldsymbol{y}\coloneqq(\abs{x_1},\ldots,\abs{x_N})^\tran,\] so that $\|\boldsymbol{y}\|_2=1$.
Since $\BTau$ is Hermitian, $\boldsymbol{x}^\herm\BTau\boldsymbol{x}\in\R$ and
\[
\boldsymbol{x}^\herm\BTau\boldsymbol{x}
\le \bigl|\boldsymbol{x}^\herm\BTau\boldsymbol{x}\bigr|
=\left|\sum_{i,j}\Tau_{ij}\,\overline{x_i}x_j\right|
\le \sum_{i,j}\abs{\Tau_{ij}}\,\abs{x_i}\abs{x_j}
=\boldsymbol{y}^\tran\abs{\BTau}\,\boldsymbol{y}
\le \lambda_{\max}(\abs{\BTau}).
\]
Maximizing over $\boldsymbol{x}$ yields $\lambda_{\max}(\BTau)\le \lambda_{\max}(\abs{\BTau})=f_{\max}(\BTau)$, proving (b).
If $\BTau=\BSigma=\abs{\BSigma}\circ\bvtheta\bvtheta^\herm$, then
$\BSigma=\operatorname{diag}(\bvtheta)\,\abs{\BSigma}\,\operatorname{diag}(\bvtheta)^\herm$,
so $\BSigma$ is unitarily similar to $\abs{\BSigma}$ and thus
$\lambda_{\max}(\BSigma)=\lambda_{\max}(\abs{\BSigma})$, i.e. $f(\BSigma,\boldsymbol{\hat{\theta}})=f_{\max}(\BSigma)$.
\item \textbf{EW-ED, (b) and (c):} The argument is analogous. Since $\abs{\Phi_{ij}}=1$ for all $i,j$, we have $\abs{\BPhi}=\mathds{1}_N$, and therefore
$\lambda_{\max}(\abs{\BPhi})=\lambda_{\max}(\mathds{1}_N)=N$, which gives (b).
If $\BTau=\BSigma=\abs{\BSigma}\circ\bvtheta\bvtheta^\herm$, then $\BPhi=\bvtheta\bvtheta^\herm$ is rank one and hence
$\lambda_{\max}(\BPhi)=\|\bvtheta\|_2^2=N$, i.e. $f(\BSigma,\boldsymbol{\hat{\theta}})=f_{\max}(\BSigma)$.

\item \textbf{ML-ED, (b) and (c).}
We refer to \cite[Eq.~(19)]{ansari2018efficient}.
\end{itemize}
\end{proof}

\begin{lemma}\label[lemma]{irwinhall} For i.i.d. random variables $X_1, X_2, X_3\sim\mathcal{U}(-\pi,\pi)$, it holds that $$\mathbb{E}\sbrack*{\cos(X_1+X_2+X_3)}=0.$$
\end{lemma}

\begin{proof}
    The distribution of the sum $X_1+X_2+X_3$ is a linearly transformed Irwin-Hall distribution~\cite[p.~269]{johnson1995continuous} with density function \begin{equation}
        f_{X_1+X_2+X_3}(x)=\begin{cases}0 & x<-3\pi\quad\vee \quad x>3\pi \\ \frac{\paren*{x\pm 3\pi}^2}{16\pi^3} & x\in\sbrack*{\mp3\pi,\mp\pi} \\ \frac{-(x+\sqrt{3}\pi)(x-\sqrt{3}\pi)}{8\pi^3} & x\in\paren*{-\pi,\pi}\end{cases}
    \end{equation} and thus it follows that \begin{equation}
        \mathbb{E}\sbrack*{\cos(X_1+X_2+X_3)}=\int_{-\infty}^\infty\cos(x)f_{X_1+X_2+X_3}(x)\mrm{d}x=0.
    \end{equation}
\end{proof}

\begin{table}[ht]
    \centering
    \setlength{\tabcolsep}{6.5pt}
\renewcommand{\arraystretch}{1.3}
\begin{tabular}{@{}*{12}{c}@{}}
\toprule
\multicolumn{2}{c}{\textbf{2017}} & \multicolumn{2}{c}{\textbf{2018}} & \multicolumn{2}{c}{\textbf{2019}}
& \multicolumn{2}{c}{\textbf{2020}} & \multicolumn{2}{c}{\textbf{2021}} & \multicolumn{2}{c}{\textbf{2022}} \\
\cmidrule(lr){1-2} \cmidrule(lr){3-4} \cmidrule(lr){5-6} \cmidrule(lr){7-8} \cmidrule(lr){9-10} \cmidrule(lr){11-12}
\textbf{Date} & \boldmath$B_\perp$ & \textbf{Date} & \boldmath$B_\perp$ & \textbf{Date} & \boldmath$B_\perp$
& \textbf{Date} & \boldmath$B_\perp$ & \textbf{Date} & \boldmath$B_\perp$ & \textbf{Date} & \boldmath$B_\perp$ \\
\midrule
06-30 & -38.667 & 06-28 & 126.597 & 08-31 &  87.637 & 07-04 & 128.288 & 07-02 & -17.107 & 06-30 & 105.739 \\
07-11 & -78.317 & 07-09 & -60.286 & 09-11 &   0.000 & 07-15 & -40.667 & 07-13 &  93.047 & 07-11 &  11.488 \\
07-22 &   9.502 & 07-20 &  81.146 & 09-22 & -161.642 & 07-26 & -152.157 & 07-24 &  76.578 & 07-22 & -91.740 \\
08-02 & -205.914 & 07-31 & -60.562 &      &         & 08-06 & 120.408 & 08-04 &  79.235 & 08-02 &  42.391 \\
08-13 & -90.121  & 08-11 &  71.826 &      &         & 08-28 & -66.015 & 08-15 & -139.032 & 08-13 & -95.066 \\
08-24 &  -4.196  & 08-22 &  89.083 &      &         & 09-08 &  40.000 & 08-26 &  -61.865 & 08-24 &  37.903 \\
09-04 & -123.740 &       &         &      &         & 09-19 & -50.188 & 09-06 & -107.201 & 09-04 & -22.675 \\
09-15 & -151.986 &       &         &      &         & 09-30 &  21.200 & 09-17 &  -10.099 & 09-15 & -41.033 \\
09-26 & -199.753 &       &         &      &         &       &         & 09-28 & -140.038 & 09-26 & -56.552 \\
\bottomrule
\end{tabular}
\caption{Acquisition dates grouped by year, with perpendicular baselines $B_\perp$ relative to 2019-09-11. Dates are formatted as \texttt{MM-DD}; $B_\perp$ values are rounded to three decimal places and expressed in meters (m).}
\label{tab:acq_dates}
\end{table}

\printbibliography

\end{document}